\newenvironment{Figure}
  {\par\medskip\noindent\minipage{\linewidth}}
  {\endminipage\par\medskip}
\newtheorem{theorem}{Theorem}[section]
\newtheorem{lemma}[theorem]{Lemma}
\newtheorem{proposition}{Proposition}
\theoremstyle{definition}
\newtheorem{remark}{Remark}
\title{Modeling the mechanisms of antibody mixtures in viral infections: the cases of sequential homologous and heterologous dengue infections}
\author{%%%% Author details
Charlotte Dugourd-Camus$^{1}$, Claudia P. Ferreira$^{2}$, Mostafa Adimy$^{1}$}
\date{}
\begin{document}
	
\maketitle
	
	\begin{center}
$^{1}$Inria, ICJ UMR5208, CNRS, Ecole Centrale de Lyon, INSA Lyon, Universit\'e Claude Bernard Lyon 1,
Universit\'e Jean Monnet, 69603 Villeurbanne, France\\
 \end{center}
	\begin{center}
		$^{2}$S\~{a}o Paulo State University (UNESP), Institute of Biosciences, 18618-689 Botucatu, SP, Brazil
	\end{center}
\begin{abstract}
Antibodies play an essential role in the immune response to viral infections, vaccination, or antibody therapy. Nevertheless, they can be either protective or harmful during the immune response. Moreover, competition or cooperation between mixed antibodies can enhance or reduce this protective or harmful effect. Using the laws of chemical reactions, we propose a new approach to modeling the antigen-antibody complex activity. The resulting expression covers not only purely competitive or purely independent binding but also synergistic binding which, depending on the antibodies, can promote either neutralization or enhancement of viral activity. We then integrate this expression of viral activity in a within-host model and investigate the existence of steady-states and their asymptotic stability. We complete our study with numerical simulations to illustrate different scenarios: firstly, where both antibodies are neutralizing, and secondly, where one antibody is neutralizing and the other enhancing. The results indicate that efficient viral neutralization is associated with purely independent antibody binding, whereas strong viral activity enhancement is expected in the case of purely competitive antibody binding.  Finally, data collected during a secondary dengue infection were used to validate the model. The data set includes sequential measurements of virus and antibody titers during viremia in patients. Data fitting shows that the two antibodies are in strong competition, as the synergistic binding is low. This contributes to the high levels of virus titers and may explain the Antibody-Dependent Enhancement phenomenon. Besides, the mortality of infected cells is almost twice as high as that of susceptible cells, and the heterogeneity of viral kinetics in patients is associated with variability in antibody responses between individuals. Other applications of the model may be considered, such as the efficacy of vaccines and antibody-based therapies.
	\\
	\\
	\textbf{Subject}: Biomathematics	\\
	\\
	\textbf{Keywords}:Humoral immunity response, Infectious diseases, Chemical reactions, Ordinary differential equations, Basic reproduction number, Local and global asymptotic stability.\\ 
34C60, 34D05, 37N25, 92C40, 92C45.
\\
\\
\textbf{Author for correspondence:}
Charlotte Dugourd-Camus:
charlotte.dugourd@inria.fr\\ 
Claudia P. Ferrreira:
claudia.pio@unesp.br\\
Mostafa Adimy: 
mostafa.adimy@inria.fr
\end{abstract}

\section{Introduction}
\begin{multicols}{2}
Active immunity is triggered by the entrance of a pathogen or by vaccination and involves the production of antibodies by the immune system. On the other hand, passive immunity involves the vertical transmission of antibodies from mother to offspring or the horizontal transmission of antibodies from humans or animals to susceptible individuals through antibody donation (antibody therapy). The interaction between the antibody and the antigen occurs through a specific chemical reaction, to form an antigen-antibody complex. This immune complex 
can either neutralize or enhance the activity of the pathogen.  The strength of the antigen-antibody complex depends firstly on the affinity of the antibody for the antigen, secondly on the number of antigen-antibody binding sites, and thirdly on the structural arrangement of the interacting parts, \cite{einav_when_2020, santillan_use_2008}. In the case of secondary infection by a homologous or heterologous virus, pre-existing antibodies and newly produced antibodies may coexist, \cite{st_john_adaptive_2019}. In most cases, an active infection leads to the production of neutralizing antibodies. However, cross-reaction of pre-existing antibodies may or may not help neutralize this active infection; in fact, it has been observed that cross-reacting antibodies can enhance the infection, \cite{clapham_modelling_2016, estofolete_Plos_2023, reich_interactions_2013, tricou_kinetics_2011, wen_antibody_2020}. In the case of influenza, dengue, or COVID-19, for example, it has been observed that two successive homologous infections can increase the neutralizing effect of antibodies, \cite{einav_when_2020, puschnik_correlation_2013}. On the other hand, secondary infection with a heterologous dengue serotype may increase the risk of developing a severe form of the disease, due to Antibody-Dependent Enhancement (ADE) phenomenon, \cite{clapham_modelling_2016, estofolete_Plos_2023, goncalvez_monoclonal_2007, halstead_dengue_2014, parren_antiviral_2001, reich_interactions_2013, tricou_kinetics_2011, wen_antibody_2020}. Furthermore, in the presence of a mixture of two antibodies and a pathogen, two other factors come into play: (i) the competition between antibodies to bind to the antigen, and (ii) the synergistic interactions between antibodies, \cite{einav_when_2020,sanna_synergistic_2000}. This leads to three distinct types of antigen-antibody binding: (a) purely independent binding, where the binding of one antibody does not affect the binding of other antibodies to the same receptor; (b) purely competitive binding, where the two antibodies cannot bind simultaneously to the same receptor; and (c) an intermediate situation, where the two antibodies can bind simultaneously to many sites of the same receptor, with synergistic interactions, i.e., the binding of one has an effect on the binding of the other, either by making it weaker or stronger, \cite{einav_when_2020,sanna_synergistic_2000}.

Several mathematical within-host infectious disease models have been developed to study the dynamics of viral infections and their interaction with the immune system, \cite{Anam2024, ansari_within_2012, ben-shachar_minimal_2015, ceron_simple_2018, clapham_within_2014, gujarati_virus_2014, nuraini_within_2009, pawelek_modeling_2012, perelson_modelling_2002, Rubio2022, tang_modelling_2020, wodarz_mathematical_2002, Xu_Math_2024}. They involve, according to each paper, healthy and infected target cells, intracellular pathogen replication, T-cells, B-cells, cytokine production, and antibodies. Most of them resemble epidemiological models and use the mass action law or a saturation function to model the force of infection. In general, only one of the immune responses - humoral or cellular - is modeled, and very few models address interaction among components of the immune response. As an example, \cite{ansari_within_2012, clapham_within_2014, nuraini_within_2009} took into account only one component of the immune system: T-cells, the only ones capable of killing infected cells. In particular, \cite{ansari_within_2012} used the Beddington-DeAngelis incidence rate to model interaction between susceptible cells and free viruses. They argued that they could thus reduce the time needed for the immune response to clean the dengue virus, compared with using the mass action law of the paper \cite{nuraini_within_2009}. In both articles, depending on parameter values, more than one endemic steady-state was found, and thresholds for the stability of each of them have been attained. In \cite{clapham_within_2014}, the authors used data coming from hospitalized primary and secondary dengue patients - virus viremia - to estimate model parameters that confirmed the role of the immune response in shaping variation between individuals consistent with the hypothesis of ADE. In \cite{Anam2024}, the authors developed deterministic and stochastic within-host models to explore different dengue infection scenarios, taking into account individual immunological variability. Their models are calibrated using empirical data on viral load and antibody concentrations (IgM and IgG), incorporating confidence intervals derived from stochastic realizations. In \cite{gujarati_virus_2014}, the authors considered heterologous antibodies that interact through a Heaviside step function that triggers neutralization or enhancement of the infection depending on the amount of the antibody from the first infection. The model also considered a delay in the immune response which switches the stability of the system through a Hopf bifurcation. The authors argued that they could qualitatively replicate the humoral immune responses observed in primary and secondary infections. The authors of \cite{ben-shachar_minimal_2015} have shown that the risk of developing a severe form of dengue may be related to increased cytokine production due to the interaction of the immune system (T-cells) with the dengue virus. The model was parameterized to reproduce qualitatively the data set described in \cite{clapham_within_2014}. The virological indicators used were the level of peak viremia, the time to peak viremia, and the viral clearance rate. Although the model did not consider the humoral immune response explicitly, the authors argue that the reparametrization of the infectivity rate can be biologically interpreted in the context of the ADE phenomenon.

Other original approaches have been developed in \cite{adimy_maternal_2020, camargo_modeling_2021,ceron_simple_2018}. All, explicitly considered interactions among susceptible and infected target cells, dengue virus, and dengue antibodies through bilinear and trilinear terms. In \cite{ceron_simple_2018}, two thresholds were obtained: the mean number of virions produced by one invading virus in the very early stage of secondary infection, and the rate at which one infected macrophage dies. The existence and stability of the endemic equilibrium depend on both parameters, and one of them acts as a weakening factor for ADE. In \cite{Rubio2022}, the authors proposed a within-host mathematical model to evaluate the role of memory B and T cells in heterologous secondary dengue infection. They showed that memory T cells play an essential role in eliminating the possibility of ADE occurrence. In \cite{tang_modelling_2020}, the authors developed a within-host model to study the impact of ADE phenomena on disease severity of Zika virus and dengue virus sequential or co-infection. In \cite{Xu_Math_2024}, the authors developed a within-host model for primary and secondary dengue infections, taking into account the IgM and IgG antibody response. The model suggests that a faster rate of clearance of antibody-virus complexes may lead to a higher peak viral load and may explain the ADE phenomena in heterogeneous dengue infections.  In \cite{adimy_maternal_2020}, the authors explored the occurrence of Dengue Hemorrhagic Fever (DHF) in infants born from dengue-immune mothers, during their first dengue infection. The neutralizing and enhancing activities of maternal antibodies against the virus are represented by a function derived from experimental data. They were able to fit the model to data on the amount of maternal antibodies and the age of the infant at which DHF was reported. The authors were thus able to reproduce the time delay observed between the end of the protective level of maternal antibodies and the onset of hemorrhagic fever, in agreement with data from the literature. The article \cite{camargo_modeling_2021} focused on the ADE hypothesis and developed a mathematical model of secondary dengue infection by a different viral serotype. It considered indirect competition between neutralizing and enhancing antibodies. Here, the functions of enhancement and neutralization - which depend on the amount and type of antibody - are derived from basic concepts of chemical reactions and used to model virus-antibody complexes binding formed by distinct populations of antibodies, classified as cross-reactive or type-specific ones. The authors of this article concluded that virus-antibody immune complexes may promote viral clearance or enhancement of infection depending on the amount of cross-reacting antibodies and the rapid activation of the neutralizing antibodies. 

The authors of \cite{ perelson_modelling_2002, wodarz_mathematical_2002} considered the modeling of within-host dynamics of HIV infection and therapy. Both are review articles and can give a broad understanding of how the immune system interacts with viruses and the state of the art of HIV modeling which includes virus evolution. The authors claimed that HIV viruses have several different epitopes that can be recognized by immune response. Furthermore, different populations of HIV viruses and antibodies when mixed, may have cross-reactive responses. In \cite{pawelek_modeling_2012}, the proposed model includes both innate and adaptive immune responses, during an influenza virus infection. Different from the other models, a class of uninfected cells that are refractory to infections was included. Modeling predictions were compared with both interferon and viral kinetic data. The first post-peak viral decline is explained by the lysis of infected cells during the innate immune response, the subsequent viral plateau/second peak is generated by the loss of the IFN-induced antiviral effect and the increased availability of target cells. 

None of the cited works either considered the direct competition of the antibodies for the virus epitope or the synergistic interaction between antibodies. Therefore, the work presented here aims to introduce a new formalism, inspired by \cite{einav_when_2020,sanna_synergistic_2000}, to efficiently describe the formation of antigen-antibody complexes, and to use it in a mathematical model to describe the interaction between a virus, target cells and two antibodies competing to bind to virus receptors, taking into account their synergistic interaction, and the effectiveness of the complex formed in neutralizing or enhancing the virus. To achieve this, we first consider a monoclonal antibody and define the viral activity function of the antigen-antibody complex. We then consider a mixture of two antibodies and generalize the viral activity function obtained previously, which now depends on the amount of the two antibodies and the interaction between them. Finally, we use this viral activity function in a within-host infectious disease model. We derive some basic properties of the within-host model obtained, in particular the existence of steady-states (disease-free and endemic), and investigate the local and global asymptotic stability of these steady-states. We complete our study with numerical simulations to highlight various scenarios. Although we have parameterized the model to study homologous and heterologous secondary infection with the dengue virus, it can be easily adapted to other viruses. It can also be extended to vaccination or antibody therapy, \cite{klein_antibodies_2013, scott_antibody_2012}. To validate the proposed formalism, we fitted the within-host infection model to the data from \cite{clapham_modelling_2016}. The data include virus and antibody titer measurements recorded sequentially during a secondary infection from dengue patients. To understand the mechanisms underlying the heterogeneity of viral kinetics in patients, we focus on estimating parameters linked to the force of infection. The results show that the IgG and IgM antibodies generated during the first and second infections, respectively, have low synergistic binding. This means that they are in strong competition to bind to the virus which contributes to the enhancement of the infection. In addition, the strength of the interaction between antibodies varies from one individual to another, which explains the differences in the temporal evolution of virus and antibody populations and, therefore, the spectrum of dengue infection that includes the phenomenon of antibody-dependent enhancement (ADE).

\section{Modeling the antigen-antibody complex}

\subsection{Monoclonal antibody binding to a receptor}

Consider a monoclonal antibody $A$ that binds to an antigen $R$ (viral receptor) through a paratope-epitope bond and, depending on its affinity, inhibits or enhances viral activity. An antibody can recognize a number $n$ of epitopes of the same antigen, \cite{parren_antiviral_2001}. Antigen-antibody binding occurs through a chemical reaction, \cite{einav_when_2020}, in which $n$ paratopes $N$ of $A$ bind to $n$ epitopes of $R$. According to the standard law of mass action, the process leading to this bond is given by the following sequential chemical reaction:
\end{multicols}
\begin{equation}\label{Chemical Reaction}
    R+nN \xrightleftharpoons[k^-]{nk^+}{R_{1N}}+(n-1)N\xrightleftharpoons[2k^-]{(n-1)k^+} \dots \xrightleftharpoons[ik^-]{(n-i+1)k^+}R_{iN}+(n-i)N\xrightleftharpoons[(i+1)k^-]{(n-i)k^+}\dots\xrightleftharpoons[nk^-]{k^+}R_{nN},
\end{equation}
\begin{multicols}{2}
\noindent where $R_{iN}$ stands for the complex formed by $i$ epitopes of $R$ bound by $i$ paratopes $N$ of $A$, $R=R_{0N}$ means free antigen, $k^+$ and $k^-$ denotes the forward and backward reaction rates, \cite{perelson_immunology_1997}. When the chemical reaction reaches equilibrium, we obtain the following equation
\[(n-i)k^{+}[R_{iN}][N]^{n-i} = (i+1)k^{-}[R_{(i+1)N}][N]^{n-i-1},\]
valid for $i \in \{0, 1,\dots,n-1\}$, where $[\cdot]$ is the concentration of the chemical species. After some algebraic manipulation, we get 
\begin{equation*}
    [R_{iN}]=\dfrac{n-i+1}{i}\dfrac{[N]}{K}[R_{(i-1)N}],  \quad \text{for } i\in \{1, \dots, n\}, 
\end{equation*}
where $K=k^-/k^+$ is the dissociation constant of the reaction.

Solving this recurrence equation yields to
\begin{equation*}
    [R_{iN}]=C^i_n\left([N]/K\right)^i[R],
\end{equation*}
with $C^i_n=\frac{n!}{i!(n-i)!}$, the binomial coefficient. Given the total antigen concentration 
\[[R_{\text{total}}]=\sum_{i=0}^n[R_{iN}]=\left(1+\dfrac{[N]}{K}\right)^n[R],\]
and knowing that the antibody concentration $A$ is proportional to the concentration of paratope $N$, $A=\theta [N]$, the fraction of occupied epitopes of the antigen is given by
\begin{equation}\label{Fract}
    \frac{[R_{iN}]}{[R_{\text{total}}]}=\dfrac{C^i_n(A/K_{D})^i}{(1+A/K_{D})^n},
\end{equation}
with $K_D=\theta K$, where $\theta$ is a constant. The antigen-antibody complex modifies viral activity. Let's define the relative activity of a virus belonging to this complex as $\xi_{i} \geq 0$, where $i$ represents the number of occupied epitopes of the antigen. The value $0<\xi_i <1$ means that the complex antigen-antibody partially inhibits virus activity, while $\xi_i>1$ means that it increases virus activity. The case $\xi_{i}=1$ means that the relative virus activity remains unchanged. In particular, $\xi_{0}=1$, because no antigen-antibody complex is formed. As an example, in \cite{Flamand1993}, it was estimated that below 130 IgG or 30 IgM bound per virions, infectivity is totally preserved. Furthermore, binding to defective entry-mediating proteins on virions would not be directly relevant for neutralization, \cite{klasse_neutralization_2014}. Therefore, the activity of the virus is obtained by adding the products of the fractions \eqref{Fract} by their associated relative activities $\xi_i$, 
\begin{eqnarray}\label{Activity1}\begin{array}{ll}
\text{Activity}:=G(A)&=\sum_{i=0}^n\xi_{i}\times \dfrac{C^i_n(A/K_{D})^i}{(1+A/K_{D})^n},\\
&=\dfrac{1+\sum_{i=1}^n\xi_{i}C^i_n(A/K_{D})^i}{1+\sum_{i=1}^n C^i_n(A/K_{D})^i}. 
\end{array}
\end{eqnarray}

\begin{Figure}
   \centering
    \includegraphics[width=1\columnwidth]{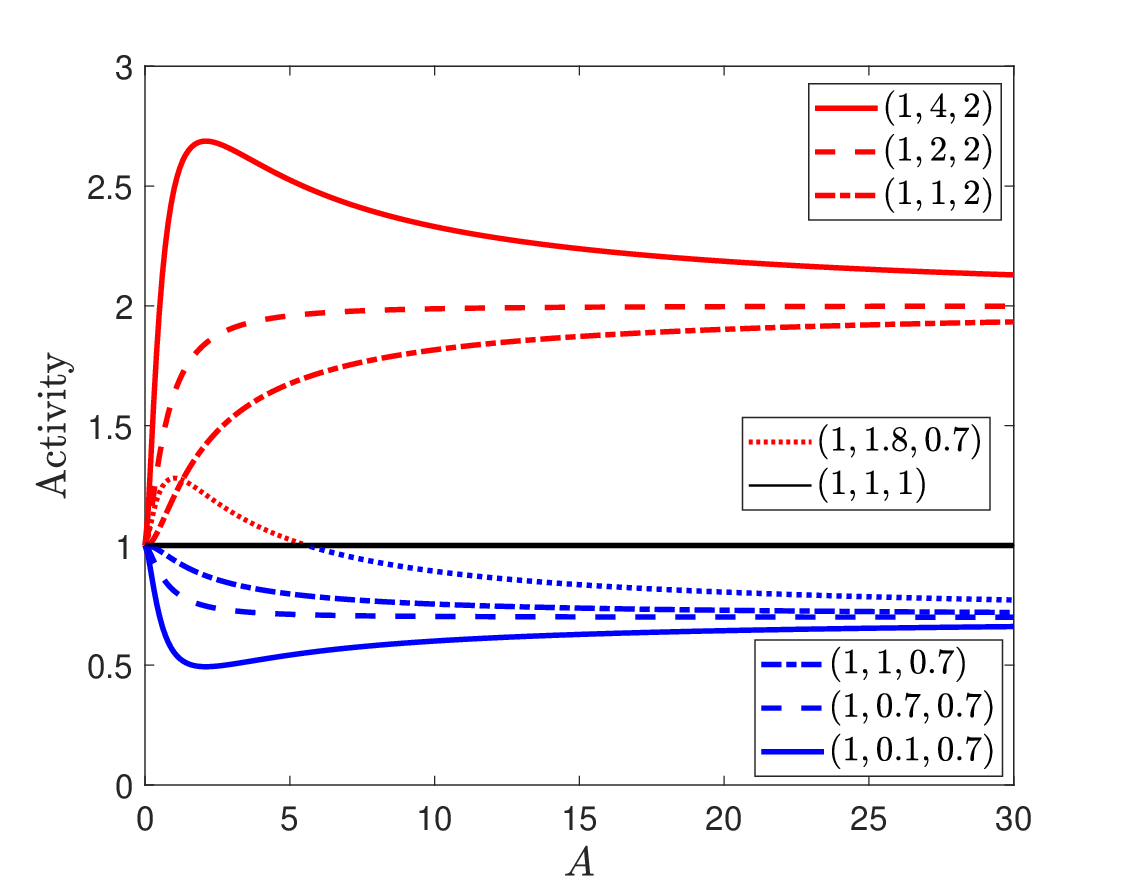}\
    \captionof{figure}{Viral activity as a function of monoclonal antibody $A$, G(A) (Equation \eqref{Activity1}). Each curve is associated with a specific set of relative viral activities $(\xi_1,\xi_2,\xi_3)$, $K_D=0.7$ and $n=3$. Red curves highlight the antigen-antibody complex leading to enhancement, while blue curves are those leading to neutralization. The black curve means that virus activity is neither neutralized nor enhanced. Depending on the amount of antibody, virus activity can change from enhancement to neutralization (red-blue dotted line).}
    \label{fig:act_one}
\end{Figure}

The binding affinity between the antigen's epitope and the antibody's paratope at a single binding site can be interpreted in different ways. For example, the higher the affinity, the lower the values of $K_D$ and $\xi_i$. This explains the property of neutralizing antibodies, which have a high affinity for the virus and, after binding to the receptor, neutralize it, $G(A)<1$ (blue lines in Figure \ref{fig:act_one}). Non-neutralizing cross-reactive antibodies, on the other hand, only partially recognize the receptor (low affinity), \cite{parren_antiviral_2001}. The constant $K_D$ is therefore high, and $\xi_i$ are close to $1$. In particular, if all relative activities $\xi_i$ are equal to $1$, the virus activity remains unaffected, $G(A)=1$ (black line in Figure \ref{fig:act_one}). For some viral infections, such as secondary dengue infection, non-neutralizing cross-reactive antibodies may enhance the infection. This means an increase in viral activity, $G(A)>1$ (red lines in Figure \ref{fig:act_one}). As the interaction antigen-antibody is a multi-hit phenomenon, \cite{pierson_structural_2008}, enhancement can occur for an intermediate amount of antibody, while neutralization occurs in the presence of a sufficiently large amount of antibody, \cite{durham_broadly_2019,katzelnick_antibody-dependent_2017}. This suggests that if the number of occupied epitopes is high enough, epitope-paratope binding will neutralize antigen activity; and if the number of occupied epitopes is intermediate, this binding will tend to enhance antigen activity (red-blue dotted line in Figure \ref{fig:act_one}).  

\subsection{Binding of a mixture of two antibodies to a receptor}\label{Subsection2.2}

In \cite{einav_when_2020}, the authors have developed a statistical mechanical model that predicts the collective efficacy of a mixture of antibodies whose constituents are assumed to bind to a single site on a receptor. We generalize their method to the case where the mixture can bind to multiple sites on a receptor, \cite{parren_antiviral_2001}. Let's consider now a mixture of two different antibodies $A_1$, $A_2$ that interact with a virus receptor $R$. This mixture can result in an antigen-antibody complex with purely independent bindings, purely competitive bindings, or synergistic bindings. This binding classification is based on the interactions (competition/cooperation) between the two antibodies. In this case, the activity of the virus is defined by the function $G(A_1, A_2)$, given by, 
\end{multicols}
\begin{equation}\label{General Activity}
    G(A_1,A_2)=\dfrac{1+\displaystyle\sum_{i=1}^{n_1}\xi_{1i}C^i_{n_1} \left(\frac{A_1}{K_{D_1}}\right)^i+\displaystyle\sum_{j=1}^{n_2}\xi_{2j}C^j_{n_2}\left(\frac{A_2}{K_{D_2}}\right)^j+\displaystyle\sum_{i=1}^{n_1}\displaystyle\sum_{j=1}^{n_2}f_{ij}\Tilde{\xi}_{1i}\Tilde{\xi}_{2j}C^i_{n_1}C^j_{n_2}\left(\frac{A_1}{\tilde{K}_{D_1}}\right)^i\left(\frac{A_2}{\tilde{K}_{D_2}}\right)^j}{1+\displaystyle\sum_{i=1}^{n_1}C^i_{n_1}\left(\dfrac{A_1}{K_{D_1}}\right)^i+\displaystyle\sum_{j=1}^{n_2}C^j_{n_2}\left(\frac{A_2}{K_{D_2}}\right)^j+\displaystyle\sum_{i=1}^{n_1}\displaystyle\sum_{j=1}^{n_2}f_{ij}C^i_{n_1}C^j_{n_2}\left(\frac{A_1}{\tilde{K}_{D_1}}\right)^i\left(\frac{A_2}{\tilde{K}_{D_2}}\right)^j}.
\end{equation}
\begin{multicols}{2}
As in \eqref{Activity1}, $n_1$ (resp. $n_2$) is the number of epitopes of the antigen that can be recognized by the antibody $A_1$ (resp. $A_2$), $\xi_{1i}$ (resp. $\xi_{2j}$) is the relative activity of the virus when bound to $i$ paratopes of $A_1$ (resp. $j$ paratopes of $A_2$), and $K_{D_1}$ (resp. $K_{D_2}$) is the dissociation constant associated to the antibody $A_1$ (resp. $A_2$). The new parameters $\Tilde{\xi}_{1i}$, $\Tilde{\xi}_{2j}$ are the relative virus activities modified by a synergistic binding (the relative virus activity may decrease in the presence of the other antibody because of competition, $\Tilde{\xi}_{kl} < \xi_{kl}$, or it can increase because of cooperation, $\Tilde{\xi}_{kl} > \xi_{kl}$). Similarly, a synergistic interaction between two antibodies can modify their binding to the receptor and thus their dissociation constants $K_{D_k}$, $k=1,2$. We then introduce modified dissociation constants $\tilde{K}_{D_k}$, with $\tilde{K}_{D_k}<K_{D_k}$ when antigen-antibody binding becomes stronger and $\tilde{K}_{D_k}>K_{D_k}$ when antigen-antibody binding becomes weaker. The coefficient $0\leq f_{ij}\leq 1$ corresponds to the fraction of simultaneous binding of both antibodies. Thus, three scenarios are possible: (i) purely competitive binding, with $f_{ij}=0$; (ii) synergistic binding, where $0< f_{ij}\leq 1$; and (iii) purely independent binding, where  $f_{ij}=1$, $\Tilde{\xi}_{ki}=\xi_{ki}$ and $\Tilde{K}_{D_k}=K_{D_k}$, for $k=1,2$, $i=1,\dots,n_1$ and $j=1,\dots,n_2$. In the case of \cite{einav_when_2020}, $n_1=n_2=1$. 
The purely competitive binding and the purely independent binding lead respectively to the following expressions of the virus activity
\end{multicols}

\begin{equation}\label{eq:G_competitive}
    G(A_1,A_2)=\dfrac{1+\displaystyle\sum_{i=1}^{n_1}\xi_{1i}C^i_{n_1} \left(\dfrac{A_1}{K_{D_1}}\right)^i+\displaystyle\sum_{j=1}^{n_2}\xi_{2j}C^j_{n_2}\left(\dfrac{A_2}{K_{D_2}}\right)^j}{1+\displaystyle\sum_{i=1}^{n_1}C^i_{n_1}\left(\dfrac{A_1}{K_{D_1}}\right)^i+\displaystyle\sum_{j=1}^{n_2}C^j_{n_2}\left(\dfrac{A_2}{K_{D_2}}\right)^j},
\end{equation}
and  
\begin{eqnarray}
\label{eq:G_independent} G(A_1,A_2)=\left(\dfrac{1+\displaystyle\sum_{i=1}^{n_1}\xi_{1i}C^i_{n_1}\left(\dfrac{A_1}{K_{D_1}}\right)^i}{\left(1+\dfrac{A_1}{K_{D_1}}\right)^{n_1}}\right)\times \left(\dfrac{1+\displaystyle\sum_{j=1}^{n_2}\xi_{2i}C^i_{n_2}\left(\dfrac{A_2}{K_{D_2}}\right)^j}{\left(1+\dfrac{A_2}{K_{D_2}}\right)^{n_2}}\right).
 \end{eqnarray}
To study the synergistic binding effect on viral activity, we introduce the following notations $\tilde{\nu}_{ij}=\Tilde{\xi}_{1i}\Tilde{\xi}_{2j}/(\xi_{1i}\xi_{2j})$ and $\tilde{\kappa}_{ij}=K_{D_1}^i K_{D_2}^j/(\tilde{K}_{D_1}^i\tilde{K}_{D_2}^j)$, and explore the values of these relative quantities. Therefore, the expression \eqref{General Activity} becomes
\begin{equation}\label{General ActivityBis}
    G(A_1,A_2)=\dfrac{1+\displaystyle\sum_{i=1}^{n_1}\xi_{1i}C^i_{n_1} \left(\frac{A_1}{K_{D_1}}\right)^i+\displaystyle\sum_{j=1}^{n_2}\xi_{2j}C^j_{n_2}\left(\frac{A_2}{K_{D_2}}\right)^j+\displaystyle\sum_{i=1}^{n_1}\displaystyle\sum_{j=1}^{n_2}\tilde{\nu}_{ij}\tilde{\kappa}_{ij}f_{ij}\xi_{1i}\xi_{2j}C^i_{n_1}C^j_{n_2}\left(\frac{A_1}{K_{D_1}}\right)^i\left(\frac{A_2}{K_{D_2}}\right)^j}{1+\displaystyle\sum_{i=1}^{n_1}C^i_{n_1}\left(\dfrac{A_1}{K_{D_1}}\right)^i+\displaystyle\sum_{j=1}^{n_2}C^j_{n_2}\left(\frac{A_2}{K_{D_2}}\right)^j+\displaystyle\sum_{i=1}^{n_1}\displaystyle\sum_{j=1}^{n_2}\tilde{\kappa}_{ij}f_{ij}C^i_{n_1}C^j_{n_2}\left(\frac{A_1}{K_{D_1}}\right)^i\left(\frac{A_2}{K_{D_2}}\right)^j}.
\end{equation}
\begin{multicols}{2}
The new parameters $\tilde{\nu}_{ij}$ and $\tilde{\kappa}_{ij}$ can be interpreted as follows. If $\tilde{\nu}_{ij}<1$, the relative viral activity due to synergistic binding decreases. On the other hand, if $\tilde{\nu}_{ij}>1$, relative viral activity increases. Furthermore, if $\tilde{\kappa}_{ij}<1$, the binding between the antibodies and the virus is enhanced by synergistic binding, and if $\tilde{\kappa}_{ij}<1$ this binding becomes weaker. In theory, many combinations are possible, but we will explore the most relevant ones: when both antibodies neutralize the virus activity, and when one antibody neutralizes and the other enhances the virus activity. 
\begin{figure*}
    \centering
    \subfigure[Neutralizing/Neutralizing]{\includegraphics[width=8cm,height=6cm]{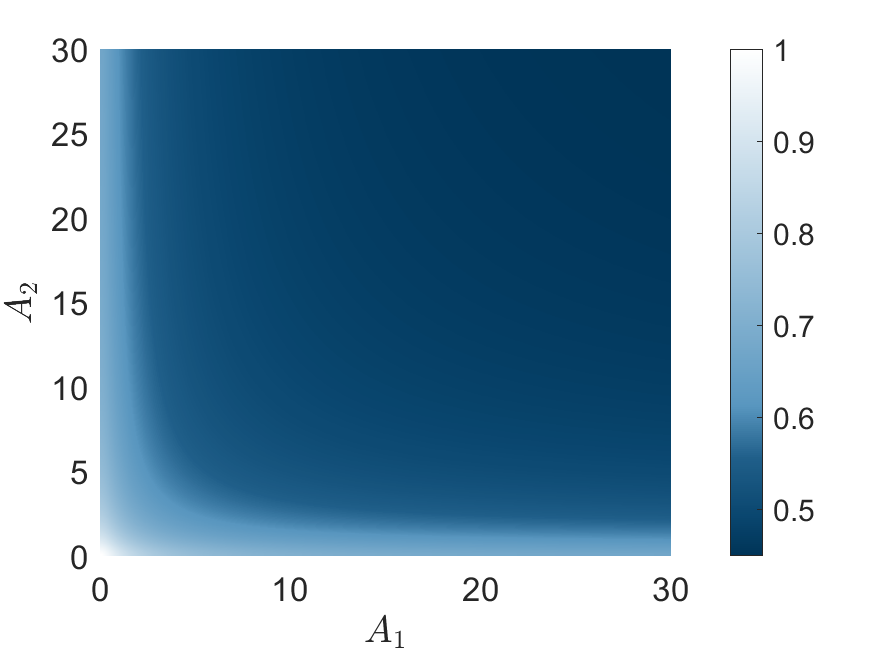}} 
\subfigure[Enhancing/Neutralizing]{\includegraphics[width=8cm,height=6cm]{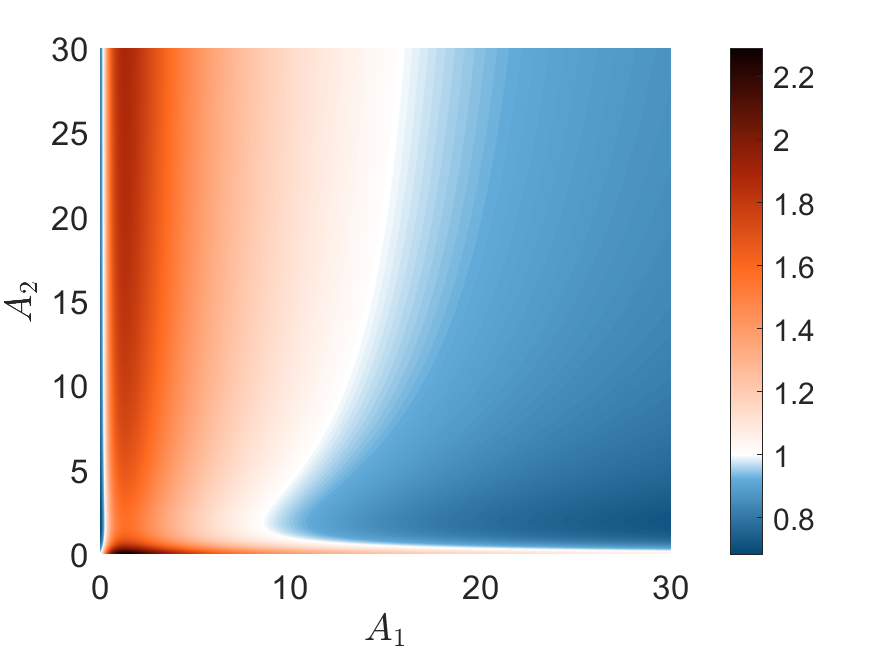}}
    \caption{Viral activity $G(A_1, A_2)$, in the case of Neutralizing/Neutralizing (left panel) and Enhancing/Neutralizing (right panel) antibodies. The case of
purely independent binding is explored (Equation \eqref{eq:G_independent}). The palette of colors, from blue (neutralization) to red (enhancement), reflects the variation in viral activity on antigen-antibody complexes.}
    \label{fig:Act3D}
\end{figure*}

Figure \ref{fig:Act3D} shows the behavior of viral activity as a function of the two antibodies $A_1$ and $A_2$. The case of purely independent binding is explored. In Figure \ref{fig:Act3D}(a), the two antibodies act to neutralize the virus. Whatever the direction, horizontal or vertical from the left or bottom border, we always move towards lower viral activity, which means more neutralization. This is coherent with the assertion of \cite{einav_when_2020,puschnik_correlation_2013} that two antibodies are more effective than one in neutralizing a virus. In the right panel of Figure \ref{fig:Act3D}, \textit{i.e.} Figure \ref{fig:Act3D}(b), it is shown the variation in viral activity in the case of enhancing antibody $A_1$ and neutralizing antibody $A_2$. For each fixed amount of neutralizing antibody, $A_2$, viral activity first increases very rapidly as the amount of enhancing antibody $A_1$ increases, reaching a peak in the red zone, then slowly decreases (moving from red to blue) until reaches a threshold in the blue (same behavior of the red/blue line of Figure \ref{fig:act_one}). On the other hand, if we fix the amount of enhancing antibody, $A_1$, and increase the amount of the neutralizing one, $A_2$, we obtain different scenarios: 
(i)	If $A_1$ is fixed at very low or high levels, viral activity will decrease rapidly at first, then reach a certain value (remain in the blue zone once entered, meaning that viral activity is neutralized);
(ii) If $A_1$ is fixed at an intermediate level, viral activity will start at a high level and change very quickly from red to blue or from dark red to light red, then return to an intermediate value and remain at this level. The main information to be deduced from Figure \ref{fig:Act3D}(b) is that infection is more severe when the amount of enhancing antibodies is in the intermediate range. This is compatible with the assertion of \cite{camargo_modeling_2021, durham_broadly_2019, katzelnick_antibody-dependent_2017} that the risk of severe dengue in a secondary heterologous infection is higher when there is an intermediate amount of pre-existing antibodies from the first infection. The common values of the parameters in Figure \ref{fig:Act3D} are $K_{D_1}=K_{D_2}=0.7$ mol ml$^{-1}$ and $\beta=8\times 10^{-9}$ ml RNA copies$^{-1}$. In Figure \ref{fig:Act3D}(a), we take $(\xi_{11},\xi_{12},\xi_{13})=(\xi_{21},\xi_{22},\xi_{23})=(1,0.95,0.65)$, and in Figure \ref{fig:Act3D}(b), we take $(\xi_{11},\xi_{12},\xi_{13})=(1,4,0.85)$ and $(\xi_{21},\xi_{22},\xi_{23})=(1,0.4,0.85)$.

The immune response to a viral infection can be extremely complex, orchestrated by a complex interaction between elements of the innate and adaptive immune response that depends on a number of factors such as host genetics, heterologous immunity, virus structure and receptor recognition, and avoidance of the immune response by the virus. Mechanisms preventing virion attachment to target cell receptors include binding at or near the viral receptor binding site and prevention of attachment by steric obstruction, disassembly or conformational modification of viral surface entry proteins, and virion aggregation.
In addition, the involvement of antibodies in virus dynamics also includes post-attachment neutralization, \cite{burton_antiviral_2023}. Nevertheless, in this article we try to describe the interactions between antibodies in a simple way. For example, we assume that viral surface molecules are static, but it is increasingly accepted that these molecules undergo conformational relaxation over time that exposes new epitopes or increases the exposure of existing epitopes. Cellular factors limit or enhance neutralization and indicate whether combinations produce additive, synergistic or antagonistic net effects. The list of antibody-virus interactions includes incomplete neutralization, synergistic neutralization, additive neutralization and enhancement; it does not exclude the existence of other interactions, \cite{klasse_neutralization_2014}.

\section{Within-host (re)infection dynamics in the presence of a mixture of two antibodies}\label{section:model}

As an example, let us consider a viral (re)infection at the cellular and immune level with a mixture of two antibodies. We note by $A_1$ the concentration of pre-existing antibody due to the first infection and by $A_2$ the concentration of the new antibody generated by the secondary infection with a homologous or heterologous virus. The quantities $X$ and $Y$ are the susceptible and infected target cells respectively, and $V$ is the free virus. Target cells may be epithelial cells in the case of influenza, \cite{kuiken_pathology_2008} or macrophages and dendritic cells in the case of dengue, \cite{kyle_dengue_2007}. The model is described by the ordinary differential system given by
\begin{equation}
\left\{\begin{aligned}
		\dfrac{dA_{1}}{d t} & =p_1(Y) - \gamma_{A_{1}}A_{1}, \\
		\dfrac{dA_{2}}{d t} & = p_2(Y) - \gamma_{A_{2}}A_{2}, \\
		\dfrac{d X}{d t} & = \Omega - \gamma_{X}X - \beta G(A_1,A_2)VX, \\
		\frac{d Y}{d t} & = \beta G(A_1,A_2)VX - \gamma_{Y}Y, \\
		\dfrac{d V}{d t} & = \Phi Y - (\delta + \gamma_{V}) V.
	\end{aligned}\right.
	\label{Newmodel2}
\end{equation}

The antibodies $A_i$ can be produced by memory B-cells or by plasmablasts upon stimulation by the virus or by infected cells, \cite{dorner_antibodies_2007}. In the case of a secondary viral infection, the first antibody $A_1$ is generated mainly by memory B-cells, while the second antibody $A_2$ is generated mainly by plasmablasts. It is reasonable to assume that the production of both antibodies depends on the concentration of infected cells. The production of antibody $A_i$ is then considered with a rate equal to $p_i(Y)$, $p_i$ being a nonnegative, continuously differentiable, and nondecreasing function on $[0,+\infty)$. For all numerical simulations, we use the function $p_i(Y)=\Lambda_i+\alpha_iY$. The natural mortality rates are denoted by $\gamma_{A_{i}}$, $\gamma_X$, $\gamma_Y$ and $\gamma_V$. The parameter $\Omega$ is the rate of production of susceptible target cells, which are generally produced in the bone marrow, \cite{trouplin_bone_2013}. $\Phi$ is the rate of virus production by infected cells,  \cite{rodenhuis-zybert_dengue_2010,samji_influenza_2009}. The parameter $\delta$ represents the rate of free virus loss by other means than natural mortality, such as neutralization or entry into target cells. When the two antibodies $A_1$ and $A_2$ interact with the virus, its activity may decrease or increase depending on the nature of these antibodies. This activity has been defined in Subsection \ref{Subsection2.2} by the function $G(A_1, A_2)$ given by \eqref{General ActivityBis}. Then, the force of infection is $\beta G(A_1,A_2)V$, with $\beta>0$. This system can model a homologous or heterologous secondary viral infection, such as dengue fever or influenza. It can also be adapted to antiretroviral therapy in the case of HIV, provided we change the functions $p_i$ and adapt the parameter values.  

Obviously, a simple mathematical model cannot fully reflect the real behavior of all the examples cited above. However, we are interested in an important part of the immune response, common to all these diseases, which concerns the mixture of competing or cooperating antibodies during infection.

Throughout this paper, we will need to make certain assumptions.
\begin{enumerate}[label=\textbf{(H\arabic*)}, ref=H\arabic*]
	\item \label{Hyp:gamma_X} $\gamma:=\gamma_X=\gamma_Y$.
	\item \label{Hyp:gamma_A} $\gamma_A:=\gamma_{A_1}=\gamma_{A_2}$.
	\item \label{Hyp:Reduction} $ G\left(A_1,A_2\right)\leq G\left(\frac{p_{1}(0)}{\gamma_{A_1}},\frac{p_{2}(0)}{\gamma_{A_2}}\right), \quad \text{for all } \left(A_1,A_2\right) \in \mathbb{D},$
with\[\mathbb{D}:=\left\{(A_1,A_2)\in\mathbb{R}^2_+:\; A_1\geq \dfrac{p_{1}(0)}{\gamma_{A_1}},\;A_2\geq \dfrac{p_{2}(0)}{\gamma_{A_2}}\right\}.\]
	\item\label{Hyp:Condition_f} $  p_{i}(Y)-p_{i}(0)\leq p_{i}'(0)Y$, for $i=1,2$, and all $Y \geq 0.$
 \item \label{Hyp:Condition_f'(0)} $p_{1}'(0)>0$ or $p_{2}'(0)>0 .$
\end{enumerate}
The assumptions \eqref{Hyp:gamma_X} and \eqref{Hyp:gamma_A}, mean that the disease does not affect cell mortality and that both antibodies have the same mortality rate. They are used, for the sake of simplicity, in the proof of the global asymptotic stability of $P_0$ and/or in the proof of the local asymptotic stability of $P^{\star}$, the disease-free equilibrium and the endemic one, respectively. Under \eqref{Hyp:Reduction}, quantities of antibodies $A_1$ and $A_2$ higher than those of the disease-free equilibrium reduce viral activity. This gives an advantage to neutralization. Hypothesis \eqref{Hyp:Condition_f} means that the dynamic of antibody production follows, at best, a linear growth. If \eqref{Hyp:Condition_f'(0)} is satisfied, then $p_1$ or $p_2$ is strictly increasing in a neighborhood of $0$, which implies a production strictly positive of $A_1$ or $A_2$ around $0$. The hypotheses \eqref{Hyp:Reduction}, \eqref{Hyp:Condition_f} and \eqref{Hyp:Condition_f'(0)} are assumed satisfied in Theorem \ref{T2} and Theorem \ref{Thm:local_stab_P*}.

\section{Mathematical analysis of the within-host (re)infection model}

In this section, we analyze the ordinary differential system \eqref{Newmodel2}. In particular, we establish the existence, uniqueness, and positivity of solutions, determine the existence of steady-states, calculate the basic reproduction number, and investigate the local and global asymptotic stability of the disease-free steady-state and the local asymptotic stability of the endemic equilibrium. For convenience, all the proofs of the theoretical results (Theorems, Propositions and Lemma) are given in the appendix, Section \ref{Appendix1}.

\subsection{Basic properties, steady-states, and basic reproduction number}

\begin{proposition}\label{P1}
The solution of the initial value problem \eqref{Newmodel2}, associated with a nonnegative initial condition, is unique, nonnegative, and bounded on $[0,+\infty)$.
\end{proposition}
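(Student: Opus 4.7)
The plan is the classical three-step recipe for an ODE system on the biological positive orthant: (i) local existence and uniqueness via Cauchy--Lipschitz, (ii) forward invariance of $\mathbb{R}_+^5$ from a boundary analysis of the vector field, and (iii) a priori bounds that upgrade the local solution to a global one. First, on the closed positive orthant the right-hand side of \eqref{Newmodel2} is $C^1$: the only nontrivial term is $G(A_1,A_2)$ from \eqref{General ActivityBis}, which is a rational function whose denominator equals $1$ plus a sum of nonnegative monomials in $A_1,A_2$, hence is bounded away from zero (in fact $\geq 1$), smooth, and bounded above on $\mathbb{R}_+^2$ by a constant depending only on the $\xi_{ij}, \tilde{\nu}_{ij}, \tilde{\kappa}_{ij}, f_{ij}$. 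Since $p_1,p_2$ are $C^1$ by assumption, Picard--Lindel\"of yields a unique maximal solution on some interval $[0,T_{\max})$.

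For nonnegativity, I would check on each coordinate hyperplane of $\partial \mathbb{R}_+^5$ that the vector field points inward or is tangent. If $A_i=0$ then $dA_i/dt=p_i(Y)\geq 0$ since $p_i$ is nonnegative; if $X=0$ then $dX/dt=\Omega>0$; if $Y=0$ then $dY/dt=\beta G(A_1,A_2)VX\geq 0$ because $G\geq 0$ and $V,X\geq 0$ in the orthant; and if $V=0$ then $dV/dt=\Phi Y\geq 0$. Hence a nonnegative initial datum produces a solution that stays in $\mathbb{R}_+^5$ throughout $[0,T_{\max})$.

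To obtain boundedness, I exploit the cancellation of the infection term in the sum $S:=X+Y$: adding the $X$ and $Y$ equations gives
\[
\frac{dS}{dt}=\Omega-\gamma_X X-\gamma_Y Y\leq \Omega - \min(\gamma_X,\gamma_Y)\,S,
\]
which by standard scalar comparison yields $S(t)\leq \max\{S(0),\,\Omega/\min(\gamma_X,\gamma_Y)\}$; consequently $X$ and $Y$ are bounded. Feeding the bound on $Y$ into the linear scalar equation $dV/dt=\Phi Y-(\delta+\gamma_V)V$ gives an immediate bound on $V$. Finally, continuity and monotonicity of $p_i$ together with the bound on $Y$ yield $p_i(Y(t))\leq M_i$ for some constant, whence the comparison $dA_i/dt\leq M_i-\gamma_{A_i}A_i$ bounds $A_i$. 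Since the trajectory remains in a fixed compact subset of $\mathbb{R}_+^5$ on $[0,T_{\max})$, the classical blow-up criterion forces $T_{\max}=+\infty$.

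There is no serious obstacle here; the only step that requires a moment's attention is the verification that $G$ is smooth and bounded on $\mathbb{R}_+^2$ despite the intricate form of \eqref{General ActivityBis}, so that both the Lipschitz argument in Step~1 and the a priori estimate in Step~3 go through without any modification.
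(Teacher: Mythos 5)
Your proposal is correct and follows essentially the same route as the paper: Picard--Lindel\"of for local existence and uniqueness, a boundary check of the vector field on $\partial\mathbb{R}^5_+$ for nonnegativity, and the cancellation of the infection term in $X+Y$ followed by cascading scalar comparisons for $V$ and $A_i$ to get global boundedness. Your explicit verification that the denominator of $G$ is bounded below by $1$ on $\mathbb{R}^2_+$ (so that $G$ is smooth and bounded there) is a detail the paper leaves implicit under its appeal to ``regularity of the functions,'' but it is the same argument.
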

\begin{proof}
The proof is given in the appendix (Subsection \ref{Appendix1}\ref{sec:ProofProp}).
\end{proof}
Let $P = \left(A_{1}^{\star}, A_{2}^{\star}, X^{\star}, Y^{\star}, V^{\star}\right)$ be an equilibrium:
	\begin{equation}
		\left\{\begin{array}{ll}
		p_{1}(Y^{\star}) - \gamma_{A_1}A_{1}^{\star} &= 0,\\
	    p_{2}(Y^{\star}) - \gamma_{A_2}A_{2}^{\star} &= 0,\\
		\Omega - \gamma_{X}X^{\star} - \beta G(A_{1}^{\star},A_{2}^{\star})V^{\star}X^{\star}&= 0,\\
		\beta G(A_{1}^{\star},A_{2}^{\star})V^{\star}X^{\star} - \gamma_{Y}Y^{\star} &= 0,\\
		\Phi Y^{\star} -\left(\gamma_{V}+\delta\right) V^{\star} &= 0.
		\end{array}\right.
		\label{homo}
	\end{equation}
By solving \eqref{homo}, we obtain: the disease-free equilibrium $P_0$ and endemic equilibrium points $P^{\star}$. The disease-free equilibrium is given by 
\begin{equation}
    P_0=\left(\dfrac{p_{1}(0)}{\gamma_{A_1}},\dfrac{p_{2}(0)}{\gamma_{A_2}},\dfrac{\Omega}{\gamma_X},0,0\right).
    \label{P0}
\end{equation}

The next-generation matrix, \cite{van_den_driessche_reproduction_2002}, is used to define the bifurcation parameter $\mathcal{R}_0$.
For this, we consider the two dimensions infected subsystem - $(Y,V)^T$ in \eqref{Newmodel2} -  that describe
the production of new infections and changes in the state of the infected
individuals
\begin{equation*}
	\left\{\begin{aligned}
		\frac{d Y}{d t} & = \beta G(A_1,A_2)VX - \gamma_{Y}Y, \\
		\dfrac{d V}{d t} & = \Phi Y - (\delta + \gamma_{V}) V.
	\end{aligned}\right.
\end{equation*}
The rates of appearance of the newly infected individuals are given by 
\begin{equation*}
    W = \left(\begin{array}{cc}
        0 &  \beta \,G\left(\frac{p_{1}(0)}{\gamma_{A_{1}}}\,,\,\frac{p_{2}(0)}{\gamma_{A_{2}}}\right)\frac{\Omega}{\gamma_X} \\
        0 & 0 
    \end{array}\right),
\end{equation*}
and the rates of transfer of individuals within the infected compartment, by any means other than the appearance of newly infected individuals from the uninfected compartment, \cite{van_den_driessche_reproduction_2002}, are given by 
\begin{equation*}
    T = \left(\begin{array}{cc} -\gamma_Y & 0 \nonumber \\
        \Phi & -\left(\gamma_{V}+\delta\right)
    \end{array}\right).
\end{equation*} 
These two matrices are obtained by decomposing the Jacobian matrix into two parts, the transmission $W$ and the transition $T$, which are evaluated at the disease-free equilibrium $P_0:=\left(p_{1}(0)/\gamma_{A_{1}},p_{2}(0)/\gamma_{A_{2}},\Omega/\gamma_X,0,0\right)$. Finally, the bifurcation parameter $\mathcal{R}_0$ is the spectral radius of the matrix product $-WT^{-1}$, \cite{van_den_driessche_reproduction_2002},
\begin{equation*}
    \mathcal{R}_0=\rho \left(-WT^{-1}\right)=\dfrac{\Phi \beta \Omega}{\gamma_{Y}\gamma_{X}\left(\gamma_V+\delta\right)}G\left(\dfrac{p_{1}(0)}{\gamma_{A_{1}}},\dfrac{p_{2}(0)}{\gamma_{A_{2}}}\right).
\end{equation*}
$\mathcal{R}_0$ measures the expected number of infected cells (or viral particles) generated by an infected cell (or viral particle) in a cell population where all cells are assumed to be susceptible. As in epidemiology, it can be proven that if $\mathcal{R}_0<1$ the disease-free equilibrium $P_0$ is locally asymptotically stable, this means that the infection is acute, otherwise if $\mathcal{R}_0>1$, $P_0$ becomes unstable and the infection is chronic. 

\subsection{Local and global asymptotic stability of the disease-free equilibrium}

For local asymptotic stability and instability of the disease-free equilibrium $P_0$, only the conditions $\mathcal{R}_0<1$ and $\mathcal{R}_0>1$ are required. 
\begin{theorem}\label{T1}
The disease-free equilibrium point $P_0$ is locally asymptotically stable if $\mathcal{R}_0 <1$ and it is unstable if $\mathcal{R}_0> 1$.
\end{theorem}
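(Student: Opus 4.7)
The plan is to linearize system \eqref{Newmodel2} at $P_0$ and exploit the block structure created by the fact that $V^\star=Y^\star=0$ at the disease-free equilibrium. Since each occurrence of $G(A_1,A_2)$ in the $\dot X$, $\dot Y$ equations is multiplied by $V$, every partial derivative of $\beta G(A_1,A_2)VX$ with respect to $A_1$, $A_2$ or $X$ vanishes at $P_0$, which kills the dependence on the antibody variables in the linearization of the $(X,Y,V)$ subsystem. Setting $G^\star:=G\!\left(p_1(0)/\gamma_{A_1},p_2(0)/\gamma_{A_2}\right)$ and $X^\star:=\Omega/\gamma_X$, I expect the Jacobian $J(P_0)$, ordered as $(A_1,A_2,X,Y,V)$, to take the block-triangular form
\begin{equation*}
J(P_0)=\begin{pmatrix}
-\gamma_{A_1} & 0 & 0 & p_1'(0) & 0 \\
0 & -\gamma_{A_2} & 0 & p_2'(0) & 0 \\
0 & 0 & -\gamma_X & 0 & -\beta G^\star X^\star \\
0 & 0 & 0 & -\gamma_Y & \beta G^\star X^\star \\
0 & 0 & 0 & \Phi & -(\delta+\gamma_V)
\end{pmatrix}.
\end{equation*}

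Next I would read off three eigenvalues directly from the diagonal, namely $-\gamma_{A_1}$, $-\gamma_{A_2}$ and $-\gamma_X$, all strictly negative. The remaining spectrum is that of the lower-right $2\times 2$ block
\begin{equation*}
M=\begin{pmatrix} -\gamma_Y & \beta G^\star X^\star \\ \Phi & -(\delta+\gamma_V)\end{pmatrix}.
\end{equation*}
Its trace $\mathrm{tr}(M)=-\gamma_Y-(\delta+\gamma_V)<0$ is automatic, so local stability is controlled entirely by the sign of $\det(M)=\gamma_Y(\delta+\gamma_V)-\beta G^\star X^\star \Phi$. Substituting the formula for $\mathcal{R}_0$ derived just above the statement gives the identity
\begin{equation*}
\det(M)=\gamma_Y(\delta+\gamma_V)\bigl(1-\mathcal{R}_0\bigr),
\end{equation*}
so $\det(M)>0$ iff $\mathcal{R}_0<1$.

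If $\mathcal{R}_0<1$, the Routh--Hurwitz conditions for a $2\times 2$ matrix (negative trace and positive determinant) ensure both eigenvalues of $M$ have negative real part, hence all five eigenvalues of $J(P_0)$ do, and $P_0$ is locally asymptotically stable. If $\mathcal{R}_0>1$, then $\det(M)<0$, so $M$ has one strictly positive real eigenvalue and $P_0$ is unstable by the linearization theorem. I do not anticipate any real obstacle: the whole argument is driven by the triangular structure at $V=0$ and the trace/determinant computation, and the only care needed is to verify that every entry outside the indicated blocks does vanish at $P_0$, which follows from $V^\star=0$ and the smoothness of $G$.
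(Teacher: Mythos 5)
Your proposal is correct and follows essentially the same route as the paper: both compute the Jacobian at $P_0$, read off the eigenvalues $-\gamma_{A_1}$, $-\gamma_{A_2}$, $-\gamma_X$, and reduce the question to the $(Y,V)$ block, where your trace/determinant argument is exactly the Routh--Hurwitz criterion the paper applies to the quadratic $\lambda^2+(\gamma_V+\delta+\gamma_Y)\lambda+\gamma_Y(\gamma_V+\delta)(1-\mathcal{R}_0)=0$. No gaps.
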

\begin{proof}
The proof is given in the appendix (Subsection \ref{Appendix1}\ref{ProofsTh}).
\end{proof}

To analyze the global asymptotic stability of the disease-free steady-state $P_0$, we introduce the following subset $\mathbb{D}$ of $\mathbb{R}^2_+$,
\[\mathbb{D}:=\left\{(A_1,A_2)\in\mathbb{R}^2_+: A_1\geq \dfrac{p_{1}(0)}{\gamma_{A_1}},\;A_2\geq \dfrac{p_{2}(0)}{\gamma_{A_2}}\right\}.\]
To prove the global asymptotic stability of $P_0$, we have to check two cases, whether the system starts with a nonnegative initial condition inside or outside the set $\mathbb{D} \times \mathbb{R}^3_+$. Before that, we need the following invariance result.
\begin{lemma}\label{Lem:D*Rinvariant}
Assume \eqref{Hyp:gamma_X}. Then, the subset $\mathbb{D}\times \mathbb{R}^3_+$ is invariant under System \eqref{Newmodel2}.
\end{lemma}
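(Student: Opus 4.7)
The plan is to split the invariance claim into two pieces: the nonnegativity of the $(X,Y,V)$-components, which is already contained in Proposition \ref{P1}, and the two lower bounds $A_i(t)\ge p_i(0)/\gamma_{A_i}$ that define $\mathbb{D}$. Only the second piece requires a new argument, and it reduces to a direct comparison on the scalar antibody equations, decoupled from the rest of the system once one knows that $Y(t)\ge 0$.

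First I would invoke Proposition \ref{P1} to obtain, for any nonnegative initial datum, a global nonnegative solution; in particular $Y(t)\ge 0$ for all $t\ge 0$, and the $(X,Y,V)$-components remain in $\mathbb{R}^3_+$. Next, for $i=1,2$, I would introduce the shifted variable $u_i(t):=A_i(t)-p_i(0)/\gamma_{A_i}$, which satisfies
$$u_i'(t)=-\gamma_{A_i}\,u_i(t)+\bigl(p_i(Y(t))-p_i(0)\bigr).$$
Since $p_i$ is nondecreasing on $[0,+\infty)$ and $Y(t)\ge 0$, the forcing term $p_i(Y(t))-p_i(0)$ is nonnegative. The variation-of-constants formula then yields
$$u_i(t)=e^{-\gamma_{A_i}t}u_i(0)+\int_0^t e^{-\gamma_{A_i}(t-s)}\bigl(p_i(Y(s))-p_i(0)\bigr)\,ds\ge 0$$
whenever $u_i(0)\ge 0$, i.e.\ precisely when $(A_1(0),A_2(0))\in\mathbb{D}$. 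Combining the two antibody bounds with the nonnegativity of $X$, $Y$, $V$ gives the invariance of $\mathbb{D}\times\mathbb{R}^3_+$ as desired.

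There is no real obstacle in this argument; everything rests on the monotonicity of $p_i$ together with the positivity of $Y$ already established in Proposition \ref{P1}. The hypothesis \eqref{Hyp:gamma_X} does not enter the invariance proof directly — it is presumably listed here only for consistency with the subsequent global stability analysis of $P_0$, where the identity $\gamma_X=\gamma_Y$ is needed to bound $X+Y$ and build a Lyapunov-type argument. As an alternative to variation of constants, one could equally well use a boundary-repulsion argument: if $A_i(t_0)=p_i(0)/\gamma_{A_i}$ for some first time $t_0>0$, then at that instant $A_i'(t_0)=p_i(Y(t_0))-p_i(0)\ge 0$, so $A_i$ cannot cross the threshold downwards, which gives the same conclusion.
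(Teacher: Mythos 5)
Your proof is correct, and its core mechanism is the same as the paper's: shift to $B_i=A_i-p_i(0)/\gamma_{A_i}$, observe that $p_i$ nondecreasing together with $Y\ge 0$ (from Proposition \ref{P1}) gives $B_i'\ge -\gamma_{A_i}B_i$, and conclude that $B_i$ stays nonnegative. Where you differ is in scope: the paper's proof does not treat the antibody equations in isolation but performs a full change of variables $(B_1,B_2,Z,Y,V)$ with $Z=\pm\bigl(\Omega/\gamma-(X+Y)\bigr)$, which turns $P_0$ into the origin and $\mathbb{D}\times\mathbb{R}^3_+$ into the nonnegative orthant; the equation $Z'=-\gamma Z$ only holds under \eqref{Hyp:gamma_X}, which is why that hypothesis appears in the lemma. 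Your observation that \eqref{Hyp:gamma_X} is not actually needed for the invariance statement itself is accurate: the set $\mathbb{D}\times\mathbb{R}^3_+$ constrains only $A_1,A_2$ from below and $X,Y,V$ from above by nothing, so nonnegativity of $(X,Y,V)$ plus your scalar comparison on $u_i$ suffices. The paper's heavier formulation buys the transformed system \eqref{Newmodel2GlobStab}, which is then reused verbatim in the proof of Theorem \ref{T2}; your version is more economical but would require that construction to be redone there. Both the variation-of-constants route and the boundary-repulsion alternative you mention are sound (the latter needs the small standard care about the first crossing time, but nothing fails).
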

\begin{proof}
The proof is given in the appendix (Subsection \ref{Appendix1}\ref{sec:ProofLemma}).
\end{proof}

We use a comparison result, Lemma 1 of \cite{camargo_modeling_2021} and \cite{Kirkilionis_comparison_2004}, to prove the global asymptotic stability of $P_0$.
\begin{theorem}\label{T2}
Assume \eqref{Hyp:gamma_X}, \eqref{Hyp:Reduction},\eqref{Hyp:Condition_f}, \eqref{Hyp:Condition_f'(0)} and $\mathcal{R}_0<1$. Then, the disease-free equilibrium $P_0$ is globally asymptotically stable.
\end{theorem}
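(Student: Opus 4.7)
The strategy is to reduce the global analysis to a comparison with a planar linear system whose stability is dictated by $\mathcal{R}_0$, and then feed the decay of $(Y,V)$ back into the $A_i$- and $X$-equations. Following the authors' hint, I split the analysis according to whether $(A_1(0),A_2(0))\in\mathbb{D}$ or not.

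\textbf{Case 1:} $(A_1(0),A_2(0))\in\mathbb{D}$. Lemma \ref{Lem:D*Rinvariant} keeps the trajectory in $\mathbb{D}\times\mathbb{R}^3_+$, so \eqref{Hyp:Reduction} yields $G(A_1(t),A_2(t))\leq G_0:=G(p_1(0)/\gamma_{A_1},p_2(0)/\gamma_{A_2})$ for all $t\geq 0$. The bound $\dot X\leq \Omega-\gamma_X X$ gives $\limsup X(t)\leq \Omega/\gamma_X$, hence for any $\epsilon>0$ there exists $T_\epsilon$ with $X(t)\leq \Omega/\gamma_X+\epsilon$ on $[T_\epsilon,\infty)$. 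For $t\geq T_\epsilon$, $(Y,V)$ satisfies the componentwise differential inequality $(\dot Y,\dot V)^\top\leq M_\epsilon (Y,V)^\top$ with
\[M_\epsilon=\begin{pmatrix}-\gamma_Y & \beta G_0(\Omega/\gamma_X+\epsilon)\\ \Phi & -(\gamma_V+\delta)\end{pmatrix}.\]
Since $M_\epsilon$ is Metzler, the comparison result of \cite{Kirkilionis_comparison_2004} (also invoked as Lemma~1 of \cite{camargo_modeling_2021}) provides a pointwise linear upper bound on $(Y,V)$. The trace of $M_\epsilon$ is negative, and a direct computation gives
\[\det(M_\epsilon)=\gamma_Y(\gamma_V+\delta)-\Phi\beta G_0\bigl(\tfrac{\Omega}{\gamma_X}+\epsilon\bigr)=\gamma_Y(\gamma_V+\delta)(1-\mathcal{R}_0)-\epsilon\Phi\beta G_0,\]
which is positive for $\epsilon$ small enough since $\mathcal{R}_0<1$. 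Routh--Hurwitz then yields $(Y(t),V(t))\to (0,0)$. Using \eqref{Hyp:Condition_f}, $0\leq p_i(Y)-p_i(0)\leq p_i'(0)Y\to 0$, so the first two equations force $A_i(t)\to p_i(0)/\gamma_{A_i}$; the third equation then gives $X(t)\to \Omega/\gamma_X$.

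\textbf{Case 2:} $(A_1(0),A_2(0))\notin\mathbb{D}$. Since $p_i$ is nondecreasing and $Y\geq 0$, one has $\dot A_i\geq p_i(0)-\gamma_{A_i}A_i$, and consequently
\[A_i(t)\geq \frac{p_i(0)}{\gamma_{A_i}}-\Bigl(\frac{p_i(0)}{\gamma_{A_i}}-A_i(0)\Bigr)e^{-\gamma_{A_i}t}.\]
For any $\eta>0$ there is $T_\eta$ with $A_i(t)\geq p_i(0)/\gamma_{A_i}-\eta$ on $[T_\eta,\infty)$. Continuity of $G$ at the DFE value, combined with \eqref{Hyp:Reduction} on the part of this tail lying in $\mathbb{D}$ and (H4) to keep $A_i$ bounded above, gives a uniform estimate $G(A_1(t),A_2(t))\leq G_0+\omega(\eta)$ on $[T_\eta,\infty)$ with $\omega(\eta)\to 0$. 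The Case 1 analysis then applies with $M_\epsilon$ replaced by its $\omega(\eta)$-perturbation, whose determinant remains positive for $(\epsilon,\eta)$ small enough.

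\textbf{Main obstacle.} The delicate point is Case 2: \eqref{Hyp:Reduction} is assumed only on $\mathbb{D}$, which is not \emph{a priori} invariant for such initial conditions, and trajectories can approach the boundary $A_i=p_i(0)/\gamma_{A_i}$ from below without ever entering $\mathbb{D}$, so Lemma \ref{Lem:D*Rinvariant} cannot be invoked directly. The resolution requires combining the decoupled linear lower bound on each $A_i$ with a continuity/robustness argument on $G$, and engineering the two simultaneous perturbations $(\epsilon,\eta)$ so that the Routh--Hurwitz determinant stays positive; this works precisely because the strict inequality $\mathcal{R}_0<1$ leaves a positive safety margin to absorb both slacks.
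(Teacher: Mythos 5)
Your proof is correct, but it follows a genuinely different route from the paper's on both halves of the argument. For initial data in $\mathbb{D}\times\mathbb{R}^3_+$, the paper changes variables to $(B_1,B_2,Z,Y,V)$ with $Z=\pm(\Omega/\gamma-(X+Y))$ — which is where \eqref{Hyp:gamma_X} is really used — and compares the full five-dimensional system to a linear one, treating the signs $\mu=\pm1$ separately; you instead extract $\limsup_{t\to\infty}X(t)\leq\Omega/\gamma_X$ from $\dot X\leq\Omega-\gamma_X X$ and run a two-dimensional Metzler comparison on $(Y,V)$ alone, feeding the decay back into the $A_i$- and $X$-equations afterwards. This is leaner: it avoids the $\mu=\pm1$ case split and does not actually invoke \eqref{Hyp:gamma_X} or \eqref{Hyp:Condition_f} in this step (you only need \eqref{Hyp:gamma_X} through Lemma \ref{Lem:D*Rinvariant}, whose $\mathbb{D}$-invariance part does not depend on it). For initial data outside $\mathbb{D}$, the paper distinguishes whether the orbit enters $\mathbb{D}$ in finite time (then reduces to the first case by invariance) or never does, in which case the monotone bounded $A_i$ converge and \eqref{Hyp:Condition_f'(0)} forces $Y\to0$; you instead push the orbit into an $\eta$-neighborhood of $\mathbb{D}$ via the exponential lower bound on $A_i$ and upgrade \eqref{Hyp:Reduction} to $G\leq G_0+\omega(\eta)$ there, letting the strict margin $\mathcal{R}_0<1$ absorb the perturbation — which dispenses with \eqref{Hyp:Condition_f'(0)} altogether. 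Two small points: the uniform estimate $G\leq G_0+\omega(\eta)$ rests on uniform continuity of $G$ on the compact closure of the tail of the orbit, which comes from the boundedness in Proposition \ref{P1} rather than from \eqref{Hyp:Condition_f} as you suggest (the modulus $\omega$ then depends on the initial condition, which is harmless for attractivity); and, like the paper, you establish global attractivity and should combine it with the local stability of Theorem \ref{T1} to conclude global \emph{asymptotic} stability.
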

\begin{proof}
The proof is given in the appendix (Subsection \ref{Appendix1}\ref{ProofsTh}).
\end{proof}
\begin{remark}\label{R1}
We give here an example of a function $G$ that satisfies Condition \eqref{Hyp:Reduction}. We consider the case of a purely independent binding \eqref{eq:G_independent} and we assume that one of the following points is satisfied.
\begin{itemize}
    \item $\xi_{1,i} \leq \xi_{1,i-1}\leq 1$ and $\xi_{2,j} \leq \xi_{2,j-1} \leq 1$, for all $2\leq i\leq n_1$, $2\leq j\leq n_2$, 
    \item $\xi_{1,n_1} \leq \xi_{1,n_1-1}$, $\xi_{2,n_2} \leq \xi_{2,n_2-1}$, and all others equal to $1$,
    \[\dfrac{p_1(0)}{\gamma_{A_1}}\geq (n_1-1)\dfrac{\xi_{1,n_1-1}-1}{\xi_{1,n_1-1}-\xi_{1,n_1}}\] \text{and} \[\dfrac{p_2(0)}{\gamma_{A_2}}\geq (n_2-1)\dfrac{\xi_{2,n_2-1}-1}{\xi_{2,n_2-1}-\xi_{2,n_2}}.\]
\end{itemize}
Then, Condition \eqref{Hyp:Reduction} is satisfied.
\end{remark}
Indeed, in the case of a purely independent binding, we have $G\left(A_1,A_2\right)=G\left(A_1,0\right)\times G\left(0, A_2\right)$ and if one of the two points of Remark \ref{R1} is satisfied, then the two functions $A_1 \mapsto G\left(A_1,0\right)$ and $A_2 \mapsto G\left(0, A_2\right)$ are decreasing on $\mathbb{D}$. This means that Condition \eqref{Hyp:Reduction} is satisfied on $\mathbb{D}$. Let's give a biological interpretation of the two conditions in Remark \ref{R1}. Suppose that $n_1=n_2=3$. We only need to explain for $G\left(A_1,0\right)$ as the other is similar. The first condition becomes $\xi_{1,3} \leq \xi_{1,2} \leq \xi_{1,1} \leq 1$. This means that the increase in the number of epitopes recognized and bound by antibodies promotes the decrease of the virus activity. In any case, the formed immune complexes result in the neutralization of virus activity.  In this case, the function $A_1 \mapsto G(A_1,0)$ is always decreasing. The second condition becomes $\xi_{1,3} \leq \xi_{1,2}$, $\xi_{1,1}=1$ and $\dfrac{p_1(0)}{\gamma_{A_1}}\geq 2\dfrac{\xi_{1,2}-1}{\xi_{1,2}-\xi_{1,3}}$. With this condition, even if the relative activities are not less than $1$, the associated viral activity $A_1 \mapsto G(A_1,0)$ is decreasing on the set $\mathbb{D}$ because the amount of antibodies before the virus invasion is big enough to stop the infection.

\subsection{Existence of endemic equilibrium points}

\begin{theorem}\label{Thm:exist_P*}
Assume that $\mathcal{R}_0>1$. Then, there exists (at least) one endemic steady-state and all endemic steady-states belong to the subset $\mathbb{D}\times \mathbb{R}^3_+$.
\end{theorem}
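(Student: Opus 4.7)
The plan is to reduce the equilibrium system \eqref{homo} to a single scalar equation in the unknown $Y^{\star}$ and then apply the intermediate value theorem on a compact interval. The first, second, and fifth equations of \eqref{homo} already express
\begin{equation*}
A_{i}^{\star} \;=\; \frac{p_{i}(Y^{\star})}{\gamma_{A_{i}}} \quad (i=1,2), \qquad V^{\star} \;=\; \frac{\Phi\, Y^{\star}}{\gamma_{V}+\delta},
\end{equation*}
directly in terms of $Y^{\star}$. Summing the third and fourth equations yields the conservation identity $\Omega = \gamma_{X}X^{\star} + \gamma_{Y}Y^{\star}$, so $X^{\star} = (\Omega - \gamma_{Y}Y^{\star})/\gamma_{X}$ (note that this identity does not require \eqref{Hyp:gamma_X}). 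Substituting all of these into the fourth equation and dividing by $Y^{\star}$, which is nonzero at an endemic equilibrium, reduces the problem to finding a zero of
\begin{equation*}
F(Y) \;:=\; Y \;-\; \frac{\Omega}{\gamma_{Y}} \;+\; \frac{\gamma_{X}(\gamma_{V}+\delta)}{\beta\,\Phi\, G\!\left(p_{1}(Y)/\gamma_{A_{1}},\, p_{2}(Y)/\gamma_{A_{2}}\right)}
\end{equation*}
on the interval $[0, \Omega/\gamma_{Y}]$.

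The next step is to check continuity of $F$ and compute its signs at the endpoints. Since $p_{1},p_{2}$ are continuous and $G$ is continuous and strictly positive on $\mathbb{R}_{+}^{2}$ (the denominator in \eqref{General ActivityBis} is bounded below by $1$), $F$ is well defined and continuous. Using the explicit formula for $\mathcal{R}_{0}$ one finds
\begin{equation*}
F(0) \;=\; -\,\frac{\Omega}{\gamma_{Y}}\!\left(1 - \frac{1}{\mathcal{R}_{0}}\right) \;<\; 0,
\end{equation*}
because $\mathcal{R}_{0}>1$, while at $Y=\Omega/\gamma_{Y}$ the first two terms of $F$ cancel and the remaining strictly positive term gives $F(\Omega/\gamma_{Y})>0$. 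The intermediate value theorem then produces at least one $Y^{\star}\in(0,\Omega/\gamma_{Y})$ with $F(Y^{\star})=0$, and reconstructing $(A_{1}^{\star},A_{2}^{\star},X^{\star},V^{\star})$ through the formulas above yields a genuine endemic equilibrium whose components are all strictly positive (positivity of $X^{\star}$ is precisely the condition $Y^{\star}<\Omega/\gamma_{Y}$).

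The inclusion of every endemic steady-state in $\mathbb{D}\times\mathbb{R}_{+}^{3}$ is then immediate: nonnegativity of each component is obtained by applying Proposition \ref{P1} to the constant trajectory at the equilibrium, and since $p_{i}$ is assumed nondecreasing on $[0,+\infty)$ and $Y^{\star}>0$, one has $p_{i}(Y^{\star})\geq p_{i}(0)$, hence $A_{i}^{\star}\geq p_{i}(0)/\gamma_{A_{i}}$, which is exactly the defining condition of $\mathbb{D}$. The only structural point to watch throughout the argument is the positivity of $G$, which makes the division by $G$ in the definition of $F$ legal and guarantees the absence of singularities; the rest is routine algebra and a one-line IVT. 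I do not expect uniqueness of $Y^{\star}$, and indeed the theorem claims none: because $G$ can be non-monotonic in the presence of enhancing antibodies (compare Figure \ref{fig:act_one}), $F$ may change sign several times on $(0,\Omega/\gamma_{Y})$, producing multiple endemic equilibria.
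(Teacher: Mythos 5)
Your proposal is correct and follows essentially the same route as the paper: reduce the equilibrium system to a single scalar equation in $Y^{\star}$ on $[0,\Omega/\gamma_{Y}]$ via the formulas for $A_{i}^{\star}$, $V^{\star}$, $X^{\star}$, and conclude by the intermediate value theorem (your $F(Y)=0$ is just an algebraic rearrangement of the paper's $I(Y^{\star})=1$, with $I(0)=\mathcal{R}_{0}>1$ and $I(\Omega/\gamma_{Y})=0$). The membership in $\mathbb{D}\times\mathbb{R}^{3}_{+}$ is likewise argued the same way, from $p_{i}$ nondecreasing and $Y^{\star}>0$; your side remark invoking Proposition \ref{P1} is unnecessary since the explicit formulas already give nonnegativity, but this is cosmetic.
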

\begin{proof}
The proof is given in the appendix (Subsection \ref{Appendix1}\ref{ProofsTh}).
\end{proof}
As mentioned before, Condition \eqref{Hyp:Reduction} promotes neutralization and it is not surprising in this case to have no endemic equilibrium if $\mathcal{R}_0<1$. We prove this in the following result.
\begin{proposition}\label{Prop:noEndemic}
  Assume that $\mathcal{R}_0<1$ and the function $G$ satisfies Condition \eqref{Hyp:Reduction} on the subset $\mathbb{D}$. Then, there is no endemic steady-state. 
\end{proposition}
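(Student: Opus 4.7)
The plan is to proceed by contradiction: suppose an endemic equilibrium $P^\star=(A_1^\star,A_2^\star,X^\star,Y^\star,V^\star)$ with $Y^\star>0$ (equivalently $V^\star>0$) exists and show the assumptions $\mathcal{R}_0<1$ and \eqref{Hyp:Reduction} are incompatible. The first step is to locate $(A_1^\star,A_2^\star)$ in the set $\mathbb{D}$. From the first two equations of \eqref{homo} I would write $A_i^\star=p_i(Y^\star)/\gamma_{A_i}$, and since each $p_i$ is nondecreasing and $Y^\star\geq 0$, I obtain $A_i^\star\geq p_i(0)/\gamma_{A_i}$, so $(A_1^\star,A_2^\star)\in\mathbb{D}$. (This is the same observation used in Theorem \ref{Thm:exist_P*} and is independent of the sign of $\mathcal{R}_0-1$.)

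Next I would extract $X^\star$ from the remaining equations. Combining the fourth and fifth equations of \eqref{homo} (and using $V^\star>0$) gives
\[
X^\star=\frac{\gamma_Y(\gamma_V+\delta)}{\beta\,\Phi\,G(A_1^\star,A_2^\star)}.
\]
Applying Condition \eqref{Hyp:Reduction} at $(A_1^\star,A_2^\star)\in\mathbb{D}$ yields $G(A_1^\star,A_2^\star)\leq G\!\left(p_1(0)/\gamma_{A_1},\,p_2(0)/\gamma_{A_2}\right)$, so
\[
X^\star\;\geq\;\frac{\gamma_Y(\gamma_V+\delta)}{\beta\,\Phi\,G\!\left(p_1(0)/\gamma_{A_1},\,p_2(0)/\gamma_{A_2}\right)}\;=\;\frac{\Omega}{\gamma_X}\cdot\frac{1}{\mathcal{R}_0}\;>\;\frac{\Omega}{\gamma_X},
\]
where the last strict inequality uses $\mathcal{R}_0<1$. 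On the other hand, adding the third and fourth equations of \eqref{homo} gives the mass balance $\Omega=\gamma_X X^\star+\gamma_Y Y^\star$, whence $X^\star=(\Omega-\gamma_Y Y^\star)/\gamma_X<\Omega/\gamma_X$ because $Y^\star>0$. These two inequalities contradict each other, so no such equilibrium exists.

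No step requires delicate estimates; the proof is essentially algebraic. The only small point to flag is that one must not invoke Theorem \ref{Thm:exist_P*} to place $(A_1^\star,A_2^\star)$ in $\mathbb{D}$, since that theorem is stated under $\mathcal{R}_0>1$. Instead the containment is read directly off the equilibrium equations and the monotonicity of $p_1,p_2$. The substantive content of the argument is the chain $X^\star = \gamma_Y(\gamma_V+\delta)/(\beta\Phi G(A_1^\star,A_2^\star))\geq(\Omega/\gamma_X)/\mathcal{R}_0$, where \eqref{Hyp:Reduction} is precisely what converts the pointwise value $G(A_1^\star,A_2^\star)$ into the specific value appearing in the definition of $\mathcal{R}_0$; without \eqref{Hyp:Reduction}, enhancement could make $G(A_1^\star,A_2^\star)$ larger than $G(p_1(0)/\gamma_{A_1},p_2(0)/\gamma_{A_2})$ and the contradiction would collapse, which is consistent with the remark that ADE-type bistability can coexist with $\mathcal{R}_0<1$ in the absence of \eqref{Hyp:Reduction}.
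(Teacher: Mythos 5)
Your proof is correct and is essentially the paper's argument in a different dress: the paper shows that the function $I(Y^{\star})=\frac{\beta\Phi}{\gamma_Y(\gamma_V+\delta)}\cdot\frac{\Omega-\gamma_Y Y^{\star}}{\gamma_X}\,G\bigl(p_1(Y^{\star})/\gamma_{A_1},p_2(Y^{\star})/\gamma_{A_2}\bigr)$ satisfies $I(Y^{\star})\leq\mathcal{R}_0<1$ for $Y^{\star}>0$, so $I(Y^{\star})=1$ has no root, which uses exactly your two ingredients (the mass balance $\Omega=\gamma_X X^{\star}+\gamma_Y Y^{\star}$ and Condition \eqref{Hyp:Reduction} applied at $(A_1^{\star},A_2^{\star})\in\mathbb{D}$). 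Your rephrasing as a contradiction between two bounds on $X^{\star}$, and your explicit remark that membership in $\mathbb{D}$ follows from the monotonicity of $p_1,p_2$ rather than from Theorem \ref{Thm:exist_P*}, are both sound.
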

\begin{proof}
The proof is given in the appendix (Subsection \ref{Appendix1}\ref{sec:ProofProp}).
\end{proof}
\begin{proposition}\label{P2}
Suppose that the function $Y^{\star}\mapsto G\left(p_{1}(Y^{\star})/\gamma_{A_1},p_{2}(Y^{\star})/\gamma_{A_2}\right)$ is decreasing on the interval $\left[0,\Omega/\gamma_{Y} \right]$. Then, if $\mathcal{R}_0< 1$ there is no endemic steady-state, otherwise if $\mathcal{R}_0> 1$ there is a unique endemic steady-state.
\end{proposition}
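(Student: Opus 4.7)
The plan is to reduce the five-dimensional steady-state system \eqref{homo} to a single scalar equation in $Y^{\star}$, and then exploit the assumed monotonicity together with a simple intermediate value / strict monotonicity argument. I proceed as follows.

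First, from the first, second, and fifth equations of \eqref{homo}, any endemic equilibrium satisfies
\[
A_{1}^{\star}=\frac{p_{1}(Y^{\star})}{\gamma_{A_{1}}},\qquad
A_{2}^{\star}=\frac{p_{2}(Y^{\star})}{\gamma_{A_{2}}},\qquad
V^{\star}=\frac{\Phi Y^{\star}}{\gamma_{V}+\delta}.
\]
Adding the third and fourth equations gives the conservation relation $\Omega=\gamma_{X}X^{\star}+\gamma_{Y}Y^{\star}$, which forces $Y^{\star}\in\bigl[0,\Omega/\gamma_{Y}\bigr)$ and $X^{\star}=(\Omega-\gamma_{Y}Y^{\star})/\gamma_{X}$. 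For an endemic steady-state, $Y^{\star}>0$, and substituting $X^{\star}$ and $V^{\star}$ into the fourth equation, after dividing by $Y^{\star}>0$, yields the scalar fixed-point equation $F(Y^{\star})=1$, where
\[
F(Y^{\star}):=\frac{\beta\Phi\Omega}{\gamma_{X}\gamma_{Y}(\gamma_{V}+\delta)}\,G\!\left(\frac{p_{1}(Y^{\star})}{\gamma_{A_{1}}},\frac{p_{2}(Y^{\star})}{\gamma_{A_{2}}}\right)\left(1-\frac{\gamma_{Y}Y^{\star}}{\Omega}\right).
\]
By construction $F(0)=\mathcal{R}_{0}$ and $F(\Omega/\gamma_{Y})=0$, and $F$ is continuous on the closed interval $[0,\Omega/\gamma_{Y}]$.

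Next I would show that $F$ is strictly decreasing on $[0,\Omega/\gamma_{Y}]$. The affine factor $Y^{\star}\mapsto 1-\gamma_{Y}Y^{\star}/\Omega$ is strictly decreasing and nonnegative on that interval, and the composite $Y^{\star}\mapsto G(p_{1}(Y^{\star})/\gamma_{A_{1}},p_{2}(Y^{\star})/\gamma_{A_{2}})$ is decreasing by hypothesis and strictly positive (as a ratio of strictly positive polynomial expressions in $A_{1},A_{2}\ge 0$ coming from \eqref{General ActivityBis}). A routine check shows that the product of a nonnegative decreasing function with a nonnegative strictly decreasing function is strictly decreasing whenever the first factor is strictly positive, which is the case here.

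The two conclusions now follow directly. If $\mathcal{R}_{0}<1$, then $F(Y^{\star})\le F(0)=\mathcal{R}_{0}<1$ for every $Y^{\star}\in[0,\Omega/\gamma_{Y}]$, so the equation $F(Y^{\star})=1$ has no solution in $(0,\Omega/\gamma_{Y})$, which rules out any endemic steady-state (existence in $[\Omega/\gamma_{Y},+\infty)$ is already precluded by $X^{\star}\ge 0$). If $\mathcal{R}_{0}>1$, then $F(0)>1>0=F(\Omega/\gamma_{Y})$, and the intermediate value theorem combined with strict monotonicity yields exactly one $Y^{\star}\in(0,\Omega/\gamma_{Y})$ with $F(Y^{\star})=1$; reading off $A_{1}^{\star},A_{2}^{\star},X^{\star},V^{\star}$ from the formulas above gives the unique endemic equilibrium.

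I do not expect any serious obstacle: the whole argument is essentially the reduction to a scalar equation together with monotonicity. The only point requiring mild care is verifying that the product $G(\cdots)(1-\gamma_{Y}Y^{\star}/\Omega)$ is \emph{strictly} decreasing, since the hypothesis on $G$ is phrased as "decreasing" without specifying strictness; the strict positivity of $G$ provides this, as noted above.
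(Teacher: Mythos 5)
Your proposal is correct and follows essentially the same route as the paper: reduce the steady-state system to the scalar equation $I(Y^{\star})=1$ on $[0,\Omega/\gamma_{Y}]$ (your $F$ is exactly the paper's $I$, already constructed in the proof of Theorem \ref{Thm:exist_P*}), note $I(0)=\mathcal{R}_0$ and $I(\Omega/\gamma_{Y})=0$, and conclude by monotonicity. Your explicit verification that the product of the positive decreasing factor $G(\cdots)$ with the strictly decreasing affine factor is strictly decreasing is a welcome touch of care that the paper leaves implicit.
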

\begin{proof}
The proof is given in the appendix (Subsection \ref{Appendix1}\ref{sec:ProofProp}).
\end{proof}

\begin{remark}\label{R2}
Suppose that the function $Y^{\star}\mapsto G\left(p_{1}(Y^{\star})/\gamma_{A_1},p_{2}(Y^{\star})/\gamma_{A_2}\right)$ is not decreasing on the interval $\left[0,\Omega/\gamma_{Y} \right] $. Then, we may have the existence of: 
\begin{itemize}
    \item an endemic equilibrium, even in the case where $\mathcal{R}_0 < 1$, 
    \item more than one endemic equilibrium in the case where $\mathcal{R}_0 > 1$. 
\end{itemize}
\end{remark}

\subsection{Local asymptotic stability analysis of an endemic equilibrium point}

\begin{theorem}\label{Thm:local_stab_P*}
Assume that \eqref{Hyp:gamma_X}, \eqref{Hyp:gamma_A}, \eqref{Hyp:Reduction}, \eqref{Hyp:Condition_f} are satisfied, $\mathcal{R}_0>1$, and the following hypothesis 
\begin{equation}
    p'_1(Y^{\star}) G_{A_1}(A_1^{\star},A_2^{\star})+p'_2(Y^{\star}) G_{A_2}(A_1^{\star},A_2^{\star}) \leq 0,
    \label{Condition Derivative G}
\end{equation}
where $P^{\star}=(A_{1}^{\star},A_{2}^{\star},X^{\star},Y^{\star},V^{\star})$ is an endemic steady-state, $G_{A_1}$ and $G_{A_2}$ are the partial derivatives of the function $G$ with respect to the first and second variables. Then, $P^{\star}=(A_{1}^{\star},A_{2}^{\star},X^{\star},Y^{\star},V^{\star})$ is locally asymptotically stable.
\end{theorem}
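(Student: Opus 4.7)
The plan is to linearize System \eqref{Newmodel2} at $P^\star$ and show that the resulting $5\times 5$ Jacobian $J$ is Hurwitz, by first factoring out two explicit negative eigenvalues (one coming from each of \eqref{Hyp:gamma_X} and \eqref{Hyp:gamma_A}) and then applying the Routh--Hurwitz criterion to the remaining cubic factor of the characteristic polynomial.

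\textbf{Two explicit eigenvalues.} Under \eqref{Hyp:gamma_X}, adding rows $3$ and $4$ of $J-\lambda I$ yields a row whose only nonzero entries lie in columns $3$ and $4$ and both equal $-(\gamma+\lambda)$; the column operation ``column $3$ minus column $4$'' then isolates $-(\gamma+\lambda)$ as a row factor, producing $(\lambda+\gamma)$. Under \eqref{Hyp:gamma_A}, the vector $v=(G_{A_2}(A_1^\star,A_2^\star),-G_{A_1}(A_1^\star,A_2^\star),0,0,0)$ is a right eigenvector of $J$ with eigenvalue $-\gamma_A$, as is checked directly: its $A$-components propagate as $-\gamma_A v_{1,2}$, and the cross coupling to $(X,Y,V)$ is killed by the identity $G_{A_1}^{\star}G_{A_2}^{\star}-G_{A_2}^{\star}G_{A_1}^{\star}=0$. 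Consequently,
\[\det(\lambda I-J)=(\lambda+\gamma)(\lambda+\gamma_A)\,Q(\lambda),\qquad Q(\lambda)=\lambda^3+a_2\lambda^2+a_1\lambda+a_0.\]

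\textbf{The cubic factor.} Writing $G^\star=G(A_1^\star,A_2^\star)$ and using the endemic identities $\beta G^\star V^\star X^\star=\gamma Y^\star$ and $\Phi Y^\star=(\delta+\gamma_V)V^\star$, which combine into the key cancellation $\beta G^\star X^\star\Phi=\gamma(\delta+\gamma_V)$, a direct computation of the remaining minor gives
\begin{align*}
a_2&=\gamma_A+\beta G^\star V^\star+\gamma+(\delta+\gamma_V),\\
a_1&=\beta G^\star V^\star(\delta+\gamma_V)+\gamma_A(\delta+\gamma_V)+\gamma_A(\beta G^\star V^\star+\gamma)-\eta\,\beta V^\star X^\star,\\
a_0&=\beta V^\star(\delta+\gamma_V)\bigl(\gamma_A G^\star-\eta X^\star\bigr),
\end{align*}
with $\eta:=p_1'(Y^\star)G_{A_1}(A_1^\star,A_2^\star)+p_2'(Y^\star)G_{A_2}(A_1^\star,A_2^\star)\le 0$ from \eqref{Condition Derivative G}. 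All three coefficients are then strictly positive.

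\textbf{Routh--Hurwitz via the $\eta=0$ reduction.} The only nontrivial Routh--Hurwitz condition for a cubic with positive coefficients is $a_1a_2>a_0$. Separating the $\eta$-linear parts by writing $a_1=a_1^{(0)}+(-\eta)\beta V^\star X^\star$ and $a_0=a_0^{(0)}+(-\eta)\beta V^\star X^\star(\delta+\gamma_V)$, one finds
\[a_1 a_2-a_0=\bigl(a_1^{(0)}a_2-a_0^{(0)}\bigr)+(-\eta)\beta V^\star X^\star\bigl(a_2-(\delta+\gamma_V)\bigr).\]
The second summand is nonnegative because $a_2-(\delta+\gamma_V)=\gamma_A+\beta G^\star V^\star+\gamma>0$ and $-\eta\ge 0$. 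The first summand is the Routh--Hurwitz quantity for $Q$ in the degenerate case $\eta=0$; there the cubic factors as $(\lambda+\gamma_A)$ times the characteristic polynomial of the $(Y,V)$-block, whose trace $-(\beta G^\star V^\star+\gamma)-(\delta+\gamma_V)$ is negative and whose determinant, $(\beta G^\star V^\star+\gamma)(\delta+\gamma_V)-\beta G^\star X^\star\Phi=\beta G^\star V^\star(\delta+\gamma_V)$, is positive (using the endemic identity again), so this $2\times 2$ block is Hurwitz and hence $a_1^{(0)}a_2-a_0^{(0)}>0$. Therefore $a_1 a_2>a_0$, $Q$ is Hurwitz, all five eigenvalues of $J$ have strictly negative real parts, and $P^\star$ is locally asymptotically stable.

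\textbf{Main obstacle.} The principal difficulty is the algebraic bookkeeping: without the endemic cancellation $\beta G^\star X^\star\Phi=\gamma(\delta+\gamma_V)$, the expressions for $a_1$ and $a_0$ contain cross-terms whose signs are not manifest, and the Routh--Hurwitz inequality $a_1 a_2>a_0$ is opaque. Isolating the $\eta$-dependent parts of $a_1$ and $a_0$ is what reduces the inequality to the $\eta=0$ case, where block-triangularity of the reduced Jacobian makes Hurwitzness elementary. The hypotheses \eqref{Hyp:Reduction} and \eqref{Hyp:Condition_f} do not enter the linearization directly; they are inherited from the existence theory (Theorem~\ref{Thm:exist_P*}) to place $P^\star$ in $\mathbb{D}\times\mathbb{R}^3_+$, which is the regime in which the sign hypothesis \eqref{Condition Derivative G} on $\eta$ is biologically meaningful.
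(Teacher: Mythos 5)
Your proof is correct and follows essentially the same route as the paper: linearize at $P^{\star}$, factor $(\lambda+\gamma)$ and $(\lambda+\gamma_A)$ out of the characteristic polynomial, and apply Routh--Hurwitz to the remaining cubic, whose coefficients (your $a_2,a_1,a_0$) coincide with the paper's $k_1,k_2,k_3$ after the endemic cancellation $\beta G^{\star}X^{\star}\Phi=\gamma(\delta+\gamma_V)$. The only differences are cosmetic --- you extract the $-\gamma_A$ eigenvalue via an explicit eigenvector rather than determinant expansion (note the degenerate case $G_{A_1}^{\star}=G_{A_2}^{\star}=0$ where $v=0$, handled trivially since the blocks then decouple), and your splitting of $a_1a_2-a_0$ into an $\eta=0$ part plus a manifestly nonnegative $\eta$-linear remainder is a slightly cleaner bookkeeping of the paper's direct expansion of $k_1k_2-k_3$.
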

\begin{proof}
The proof is given in the appendix (Subsection \ref{Appendix1}\ref{ProofsTh}).
\end{proof} 
\begin{remark}
Condition \eqref{Condition Derivative G} can be interpreted geometrically by writing it as a scalar product $\nabla p \cdot \left(\nabla G\right)^T \leq 0$ at the endemic equilibrium, where $p$ is the vector $(p_1,p_2)^T$, $T$ is for transpose and $\nabla$ is the gradient. As $p'_i(Y^{\star})$ is nonnegative, the vector $ \nabla p$ is located in the first quadrant, $(+,+)$. Then, $\left(\nabla G\right)^T$ has to be at least in the second or fourth quadrant, $(-,+)$ or $(+,-)$. If $\left(\nabla G\right)^T$ is in the third quadrant, $(-,-)$, Condition \eqref{Condition Derivative G} is always satisfied. It should also be pointed out that Condition \eqref{Condition Derivative G} is only sufficient but not necessary for the result of Theorem \ref{Thm:local_stab_P*} to be valid.
\end{remark}

\section{Numerical simulations}\label{NumSim}

We will now numerically study the system \eqref{Newmodel2} in two different scenarios: (i)
a mixture of two neutralizing antibodies (neutralizing/neutralizing); (ii) a mixture of enhancing and neutralizing antibodies (enhancing/neutralizing). Our main goal is to investigate how competition/cooperation between the two antibodies $A_1$ and $A_2$ can affect disease progression during secondary infection. Competition or cooperation between the two antibodies is primarily determined by their ability to bind to the same receptor. This is indicated by the parameters $f_{ij}$. Once binding to the same receptor, the antibodies can interact with synergistic binding through the parameters $\tilde{\nu}_{ij}$ and $\tilde{\kappa}_{ij}$.

For the numerical simulations, we use the functions $p_i(Y)=\Lambda_i+\alpha_iY$. The baseline parameters are taken from the literature (see \cite{adimy_maternal_2020, camargo_modeling_2021}, for further details) and are summarized in Table \ref{tab:Parameters_value}. They are associated with dengue infection. The number of parameters $f_{ij}$ is $9$. In our analysis, we will consider the case where $f_{ij}$ are independent of $i,j$. The initial conditions are $(A_1(0),A_2(0),X(0),Y(0),V(0))=(0,0,10^7,0,1)$. The units for $A_1$, $A_2$ are mol ml$^{-1}$, for $X$, $Y$ cells ml$^{-1}$, and for $V$ RNA copies ml$^{-1}$. 

In our approach, we fix the two parameters $\tilde{\nu}_{ij}$ and $\tilde{\kappa}_{ij}$ at a value corresponding to each of the two scenarios, neutralizing/neutralizing and enhancing/neutralizing, and vary $0\leq f_{ij}\leq 1$. Similar results can be obtained if we fix $f_{ij}$ and either $\tilde{\kappa}_{ij}$ or $\tilde{\nu}_{ij}$, and vary the other. 

\subsection{Homologous sequential DENV infection (Neutralizing/Neutralizing)}

We consider the case where both antibodies $A_1$ and $A_2$ have the same characteristics to neutralize the virus, $K_{D_1}=K_{D_2}$ and $(\xi_{11},\xi_{12},\xi_{13})=(\xi_{21},\xi_{22},\xi_{23})$. We choose $K_{D_1}=0.6$ and $(\xi_{11},\xi_{12},\xi_{13})=(1,0.95,0.8)$. We assume that their neutralizing effects are reinforced by the same synergistic binding, $\tilde{K}_{D_1}=\tilde{K}_{D_2}$ and $(\tilde{\xi}_{11},\tilde{\xi}_{12},\tilde{\xi}_{13})=(\tilde{\xi}_{21},\tilde{\xi}_{22},\tilde{\xi}_{23})$. We take the values $\tilde{K}_{D_1}=0.8$ and $(\tilde{\xi}_{11},\tilde{\xi}_{12},\tilde{\xi}_{13})=(\sqrt{0.65},\,0.95 \times \sqrt{0.65},\,0.8\sqrt{0.65})$. This corresponds to $\tilde{\nu}_{ij}=0.65<1$ and $\tilde{\kappa}_{ij}=1$, for all $i,j=1,2,3$.

We first consider the case where there is only one neutralizing antibody $A_2$, taking in Figure \ref{fig:evo_rev1}(a), $\Lambda_1=0$ and $\alpha_1=0$. We then examine what happens when a second neutralizing antibody is introduced, taking in Figure \ref{fig:evo_rev1}, (b)-(d), $\Lambda_1=7.5\times 10^3$ and $\alpha_1=0.4$. Finally, we vary the parameter $f_{ij}$, which simulates the fraction of simultaneous binding of the two antibodies to the same receptor, in each sub-figure (b), (c), and (d) of Figure \ref{fig:evo_rev1}, by taking: (b) $f_{ij}=0$, (c) $f_{ij}=10^{-6}$ and (d) $f_{ij}=1$. Figures \ref{fig:evo_rev1}, (a)-(d), show that the presence of two neutralizing antibodies is at least more efficient than a single one. Figure \ref{fig:evo_rev1}(b) shows that purely competitive binding, $f_{ij}=0$, of two neutralizing antibodies, while causing a very rapid decrease in viral activity, does not significantly increase neutralization, compared with a single neutralizing antibody, Figures \ref{fig:evo_rev1}(a). As the synergistic binding is increased, neutralization is improved. Indeed, as the parameter $f_{ij}$ increases, the proportion of infected cells is reduced and the amount of virus at peak is lower, while infection also occurs later. 

\subsection{Heterologous sequential DENV infection (Enhancing/Neutralizing)}

We consider one enhancing antibody, $A_1$, and one neutralizing antibody, $A_2$. We choose $K_{D_1}=\tilde{K}_{D_1}=K_{D_2}=\tilde{K}_{D_2}=0.8$ and $(\xi_{11},\xi_{12},\xi_{13})=(\tilde{\xi}_{11},\tilde{\xi}_{12},\tilde{\xi}_{13})=(3,3,3)$. We assume that the relative viral activities associated with the second antibodies are such that $\xi_{1i} \xi_{2j}=\tilde{\xi}_{1i} \tilde{\xi}_{2j}=1$. With this choice, we have $\tilde{\kappa}_{ij}=1$ for all $i,j$. We also choose the parameter $\tilde{\nu}_{ij}$ with the same value for all $i,j$. Two situations can be considered, one with $\tilde{\nu}_{ij}<1$, meaning that the sum of the effects of the two antibodies leads to neutralization, and the other with $\tilde{\nu}_{ij}>1$, meaning that the sum of the effects of the two antibodies produces enhancement. We focus here on the case with $\tilde{\nu}_{ij}=0.4<1$.

Let's first consider the case of a single neutralizing antibody $A_2$, by taking in Figure \ref{fig:evo_rev2}(a), $\Lambda_1=0$ and $\alpha_1=0$. The parameters were chosen to obtain a high degree of neutralization. We then present the enhancing antibody $A_1$, by considering in Figure \ref{fig:evo_rev2}, (b)-(d), $\Lambda_1=7.5\times 10^3$ and $\alpha_1=0.4$. As in the previous case, we study the effect of synergistic binding by increasing the parameter $f_{ij}$. More precisely, for $f_{ij}=0$, a purely competitive binding, Figure \ref{fig:evo_rev2}(b) shows that the antibody $A_1$ strongly increases infection, with a peak of the virus reached very quickly and viral activity remaining in the red zone (refereed to Figure \ref{fig:Act3D}). As synergistic binding increases, Figure \ref{fig:evo_rev2}: (b) $f_{ij}=0$, (c) $f_{ij}=10^{-3}$, (d) $f_{ij}=1$, neutralization becomes more effective than enhancement. Indeed, the proportion of infected cells is reduced and the amount of virus at peak is lower, while infection also occurs later. In the same way, viral activity falls into the blue zone (referred to Figure \ref{fig:Act3D}). Based on these comparisons, we can affirm that purely competitive binding, $f_{ij}=0$, gives an advantage to the enhancing antibody while purely independent binding, $f_{ij}=1$, confers a greater competitive advantage to the neutralizing antibody. 

%%%%%%%%% FIGURES of numerical simulations %%%%%%%%%%%%%%

\begin{figure*}
   \centering
(a) Single neutralizing antibody ($\Lambda_1=\alpha_1=0$)\\ 
\subfigure{{\includegraphics[scale=0.34]{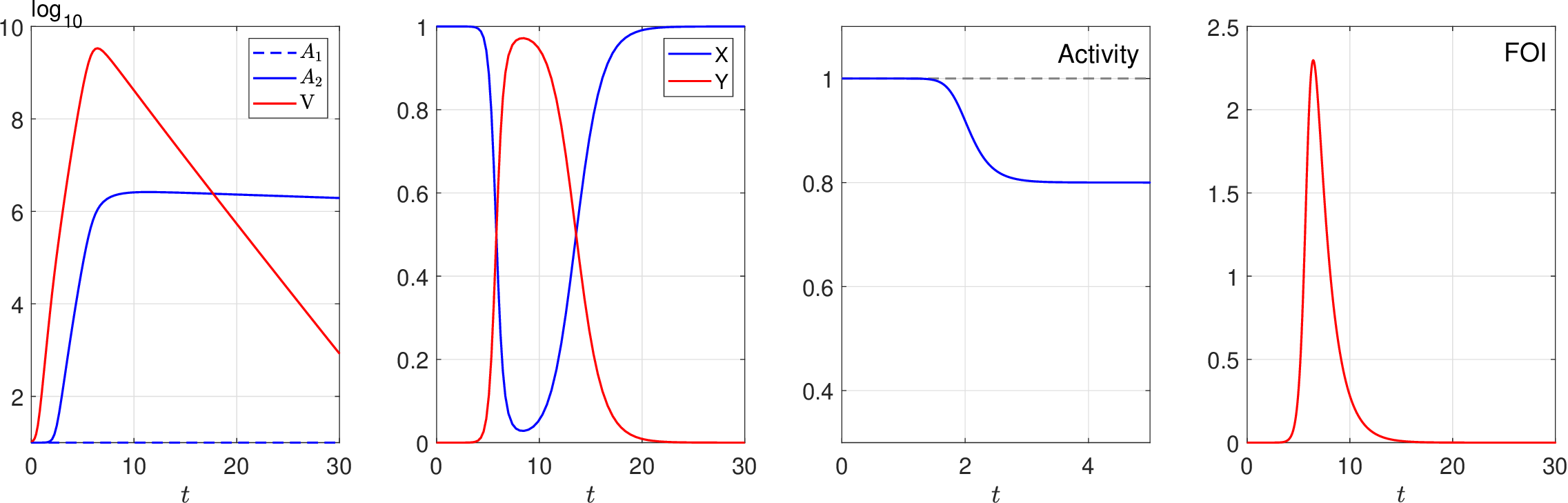}}
}\\
(b) Two homologous antibodies with purely competitive binding ($f_{ij}=0$) \\

\subfigure{{\includegraphics[scale=0.34]{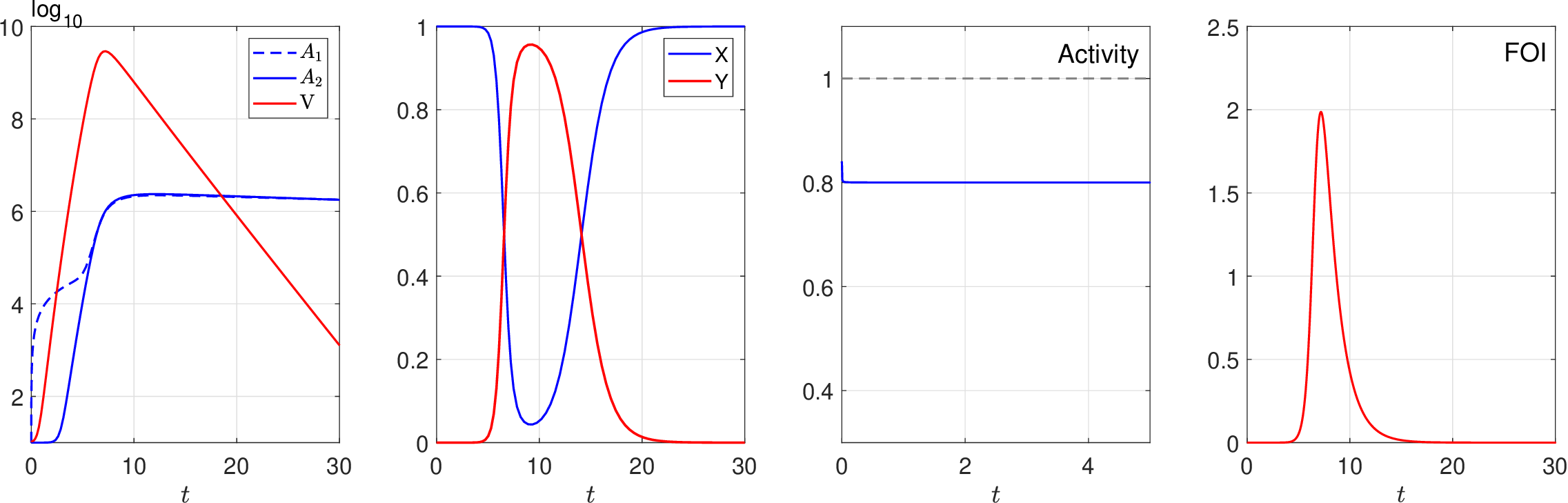}}
}\\
(c) Two homologous antibodies with low synergistic binding ($f_{ij}=10^{-6}$)\\
\subfigure{{\includegraphics[scale=0.34]{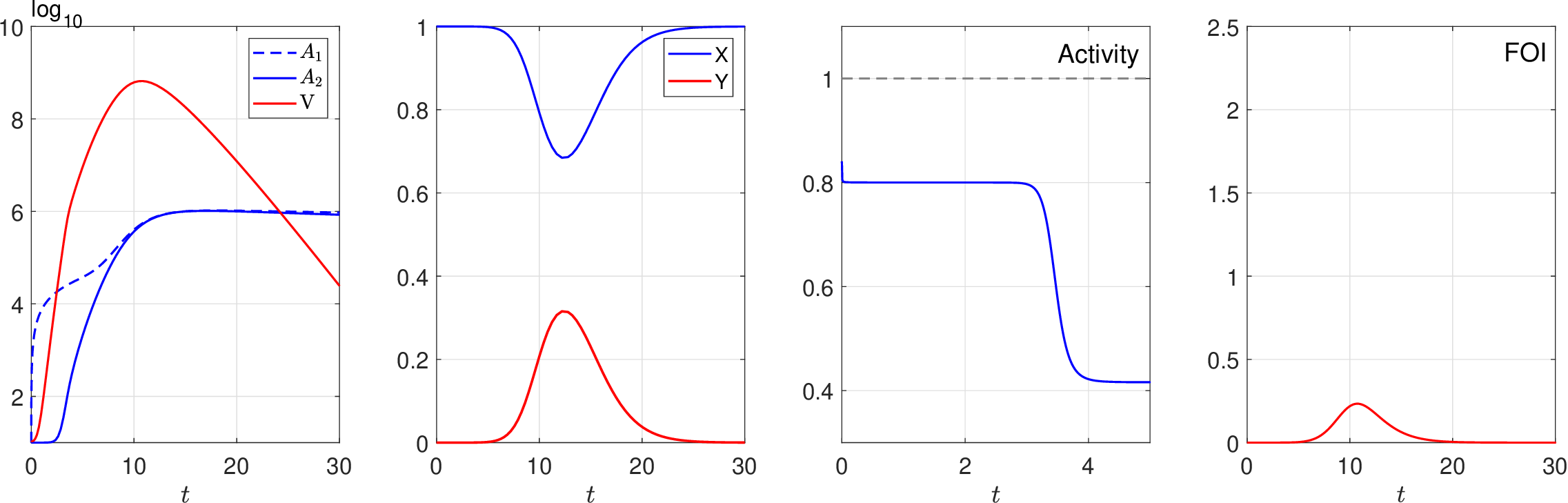}}
}\\
(d) Two homologous antibodies with high synergistic binding ($f_{ij}=1$)\\
\subfigure{{\includegraphics[scale=0.34]{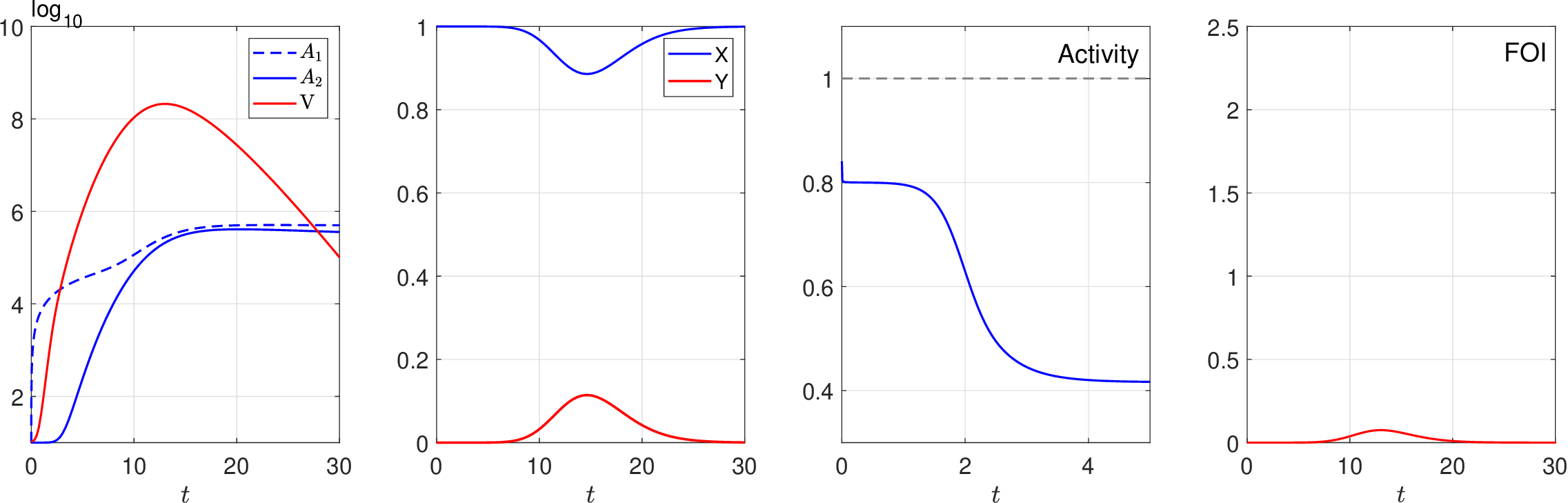}}
}
\caption{Neutralizing/Neutralizing. Here we compare the temporal dynamics of antibodies, cells, and viral populations during a primary viral infection and a secondary infection by a homologous virus, by varying the competition between the two antibodies to bind to the virus, (b) $f_{ij}=0$, (c) $f_{ij}=10^{-6}$ and (d) $f_{ij}=1$. From left to right, the first column of each panel shows the temporal evolution of the neutralizing antibody $A_1$ (dotted blue line), the neutralizing antibody $A_2$ (solid blue line), and the virus $V$ (solid red line). The second column shows the evolution of the proportion of susceptible and infected cells $X$ and $Y$ (solid blue and red lines). The third column shows the temporal activity of the virus (solid blue line). The gray dotted line indicates the threshold of $1$, i.e., when the complex antigen-antibody does not affect virus activity. The last column shows the force of infection, FOI (red line). The parameter values are given in Table \ref{tab:Parameters_value}.
}
\label{fig:evo_rev1}
\end{figure*}

\begin{figure*}

   \centering 
(a) Single neutralizing antibody ($\Lambda_1=\alpha_1=0$)\\ 
\subfigure{{\includegraphics[scale=0.34]{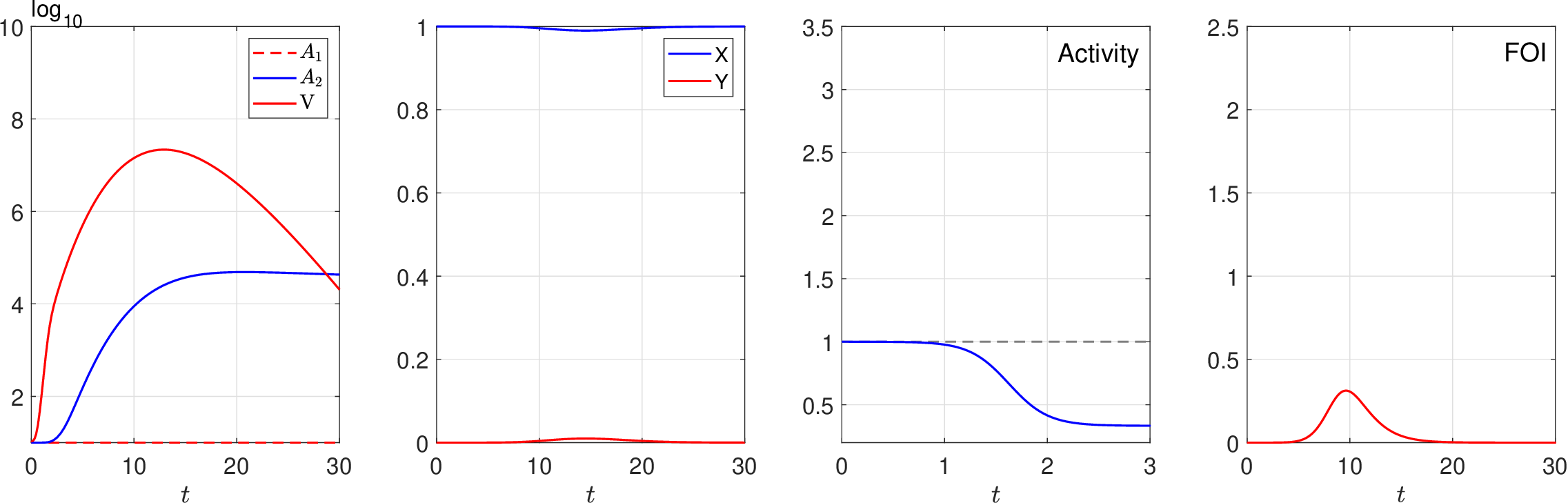}}
}\\
(b) Two heterologous antibodies with purely competitive binding ($f_{ij}=0$) \\
\subfigure{{\includegraphics[scale=0.34]{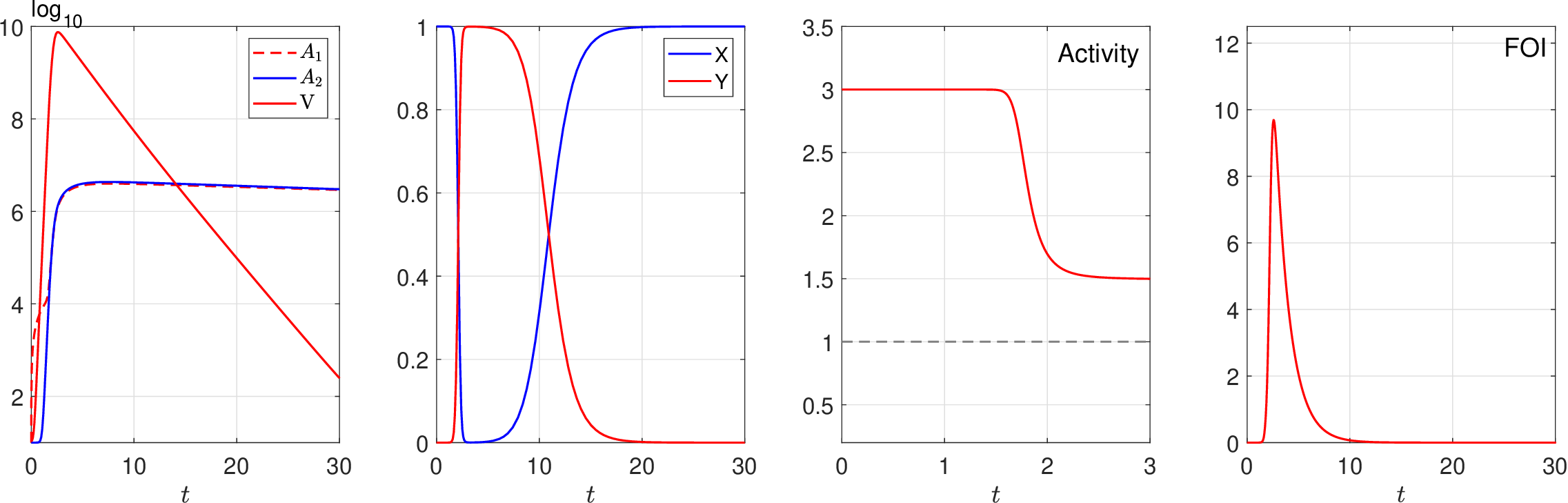}}
}\\
(c) Two heterologous antibodies with low synergistic binding ($f_{ij}=10^{-3}$)\\
\subfigure{{\includegraphics[scale=0.34]{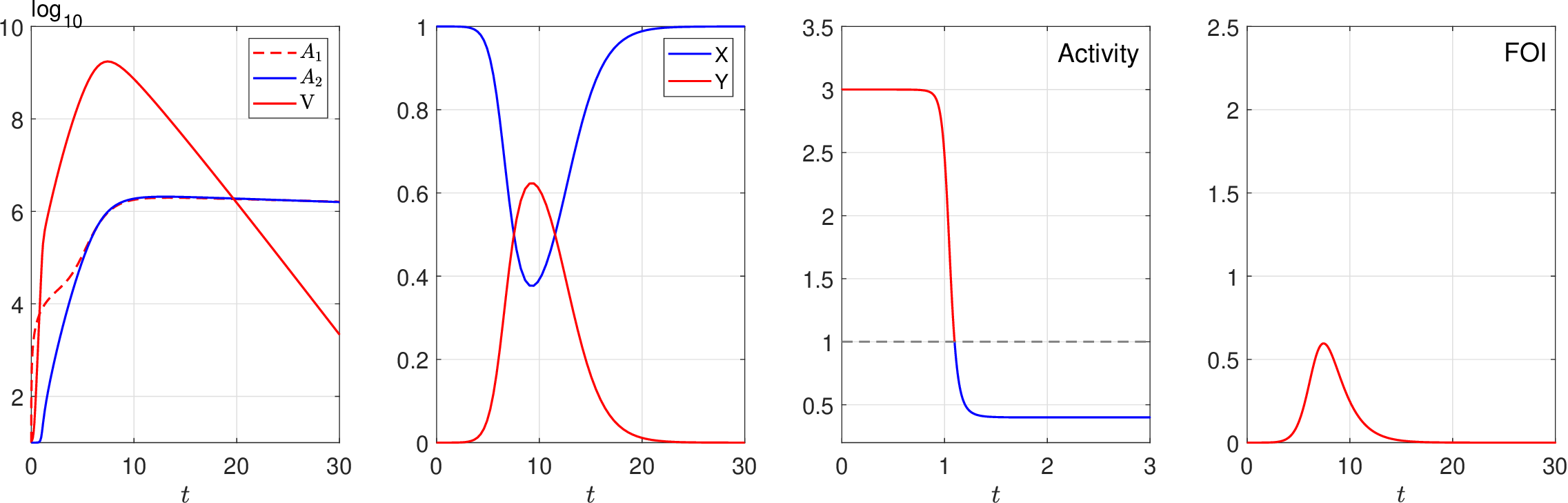}}
}\\
(d) Two heterologous antibodies with high synergistic binding ($f_{ij}=1$)\\
\subfigure{{\includegraphics[scale=0.34]{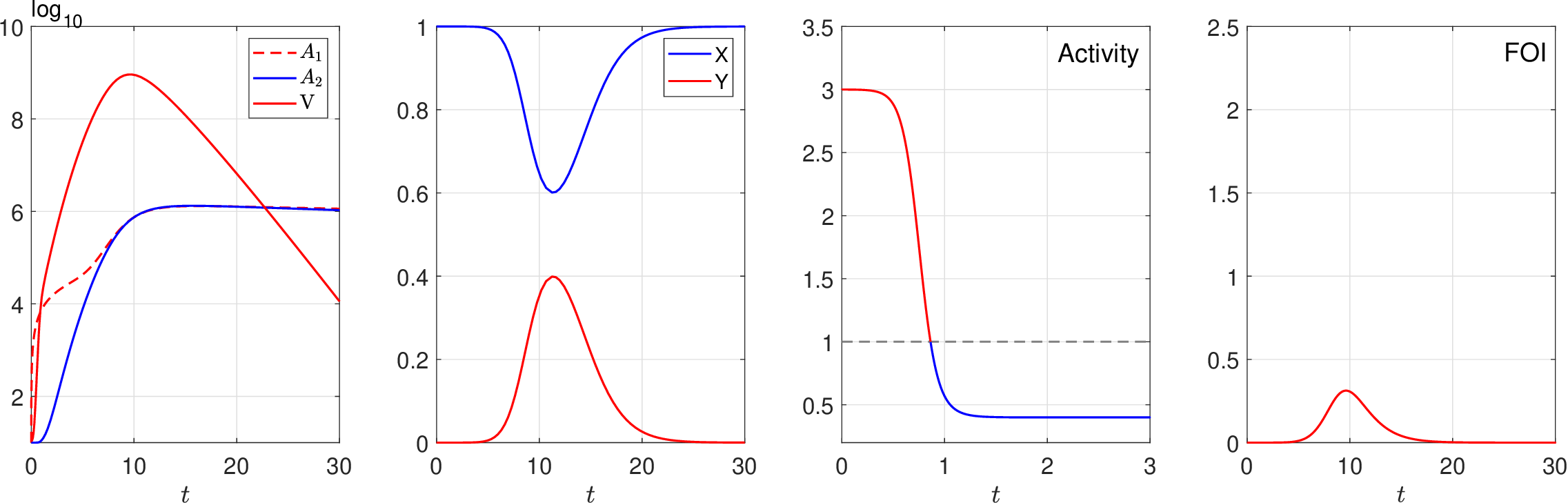}}
}
\caption{Enhancing/Neutralizing. Here we compare the temporal dynamics of antibodies, cells, and viral populations during a primary viral infection and a secondary infection by a heterologous virus, by varying the competition between the two antibodies to bind to the virus, (b) $f_{ij}=0$, (c) $f_{ij}=10^{-3}$ and (d) $f_{ij}=1$. From left to right, the first column of each panel shows the temporal evolution of the enhancing antibody $A_1$ (dotted red line), the neutralizing antibody $A_2$ (solid blue line), and the virus $V$ (solid red line). The second column shows the evolution of the proportion of susceptible and infected cells $X$ and $Y$ (solid blue and red lines). The gray dotted line indicates the threshold of $1$, i.e., when the complex antigen-antibody does not affect virus activity. The third column shows the temporal virus activity (solid blue-red line). The last column shows the force of infection, FOI (red line). The parameter values are given in Table \ref{tab:Parameters_value}.
}
\label{fig:evo_rev2}
\end{figure*}

\begin{figure*}
\centering
    \includegraphics[scale=0.55]{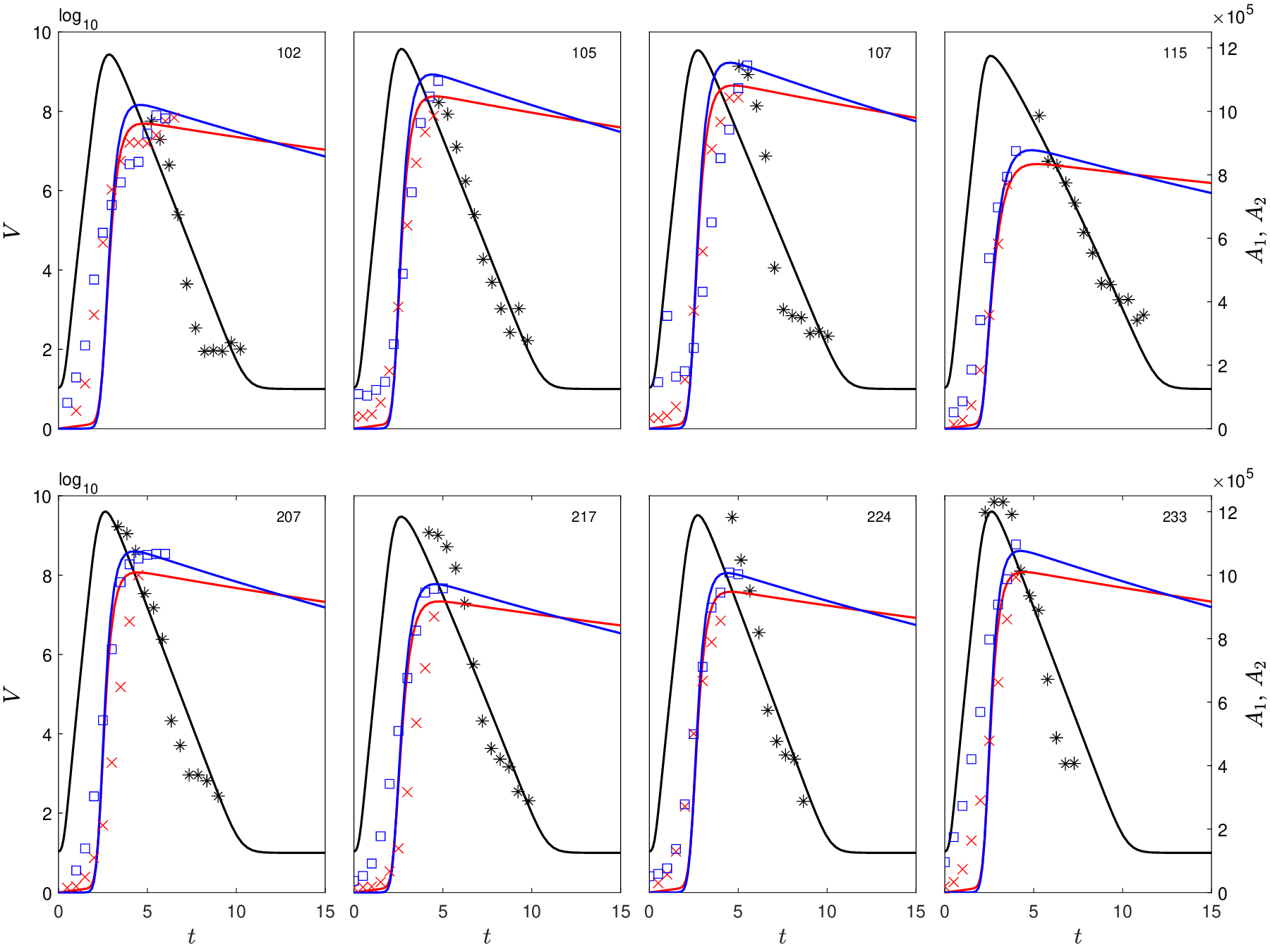}
        \caption{Temporal evolution of enhancing antibodies (red curves), neutralizing antibodies (blue curves), and viral populations (black curves) during secondary infection with a heterologous virus. Each plot corresponds to an individual infected with dengue virus serotype II. Patient data are represented by symbols: {{\color{red}$\times$} IgG titers, {\color{blue}$\square$} IgM titers and \color{black}$\star$} virus titers. The set of parameters used in each simulation is given in Tables \ref{tab:SF_value} and \ref{tab:Estimation}, and the patient identification number on the top right of each plot. }
\label{fig:evo_revVA_est}
\end{figure*}

\begin{figure*}
\centering
$f_{ij}$\hspace{3cm}$\tilde{\nu}_{ij}$\hspace{3cm}$\delta$\hspace{3cm}$\gamma_Y$\hspace{3cm}$t_V$\\
    \includegraphics[scale=0.46,trim=0cm 0cm 0cm 0.7cm,clip ]{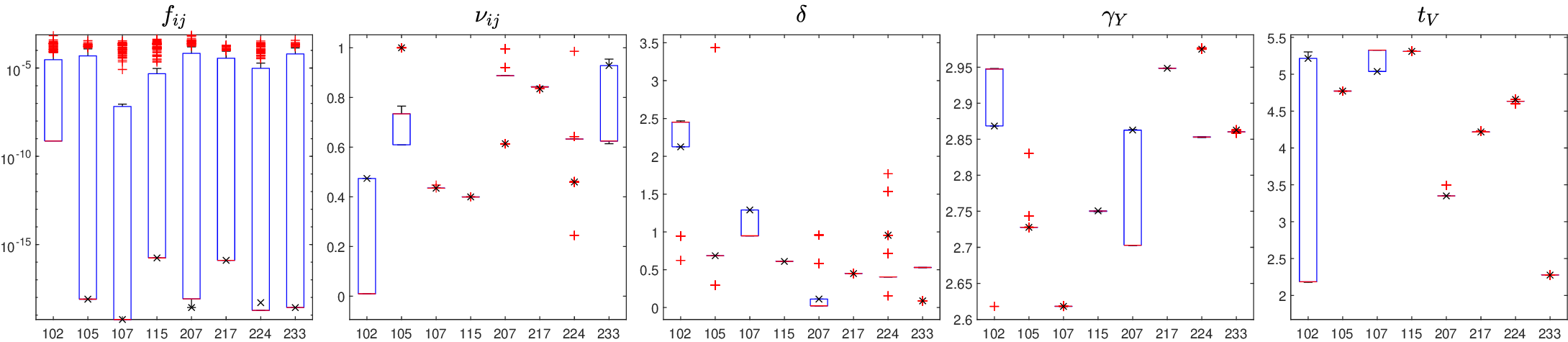}
    \caption{Boxplots of estimated model parameters $f_{ij}$, $\tilde{\nu}_{ij}$, $\delta$, $\gamma_Y$ and virus offset $t_V$ for each patient. The results comprise a set of 200 parameters obtained after running the Genetic Algorithm (GA) for 50 generations. The set of parameters corresponding to the best solution of the GA is represented by the symbol {\color{black}$\times$}, and the corresponding values are given in Table \ref{tab:SF_value}.}\label{fig:boxplot}
\end{figure*}

\subsection{Fitting dengue data}\label{FittData}

Here we compare the numerical simulations with data from measurements of virus and antibody titers collected sequentially during infection of dengue patients, Figure \ref{fig:evo_revVA_est}. These data were previously published in \cite{clapham_modelling_2016}. We selected part of the data corresponding to secondary infection by DENV-2 and, among these, we considered only eight individuals for whom the size of the temporal series was long enough to permit estimating the parameters related to the force of infection. To compare the data with the numerical simulations, the viral load data are offset by an elapsed time $t_V$ which corresponds to the incubation period of the dengue virus of 2-6 days, \cite{clapham_modelling_2016, Guzman_2016}. IgG and IgM antibody levels are multiplied by scaling factors, $SF_{IgG}$ and $SF_{IgM}$, because the ELISA assay cannot measure them directly, \cite{clapham_modelling_2016}. In addition, the IgG/IgM measurements are shifted by $t_{IgG}$ and $t_{IgM}$ to take into account the time delay between exposure to the antigen and the development of an effective immune response leading to clearance of the virus, \cite{ghosh_within_2021}. As this is a secondary, the IgG titers correspond to the antibodies generated by the first infection and the IgM titers to those generated by the second infection. The scaling factors $SF_{IgG}$, $SF_{IgM}$, and the antibody offsets $t_{IgG}$ and $t_{IgM}$ are defined by hand. Table \ref{tab:Parameters_value} shows the values of the parameters used in the simulations. Four parameters of the model $f_{ij}$, $\tilde{\nu}_{ij}$, $\delta$, $\gamma_Y$ plus the virus offset $t_V$ are estimated by a Genetic Algorithm (GA). For each individual patient, the best solution corresponds to the minimum of the function $\sum_{i=1}^{n_m}|\left(\log_{10}(V(t_i)+10)-\log_{10}(V_i+10)\right)|/\log_{10}(\sigma^2_V+10)$, where $\sigma^2_V$ is the variance of the viral load measurements, $V_i$ the i$^{th}$ measurement, $n_m$ the number of measurements for the patient, $V(t_i)$ the virus load given by the simulation at time $t_i$ The results are plotted in Figure \ref{fig:evo_revVA_est}. The parameter ranges used in the GA are as follows $f_{ij}\in [0,1]$, $\tilde{\nu}_{ij}\in[0,1]$, $\delta\in[0,5]$, $\gamma_Y\in[2.5,3]$ and $t_V\in[0,10]$. In addition, the other parameters used to run the GA are $200,\,0.1$, respectively, the size of the GA population and the crossover rate. The mutation function randomly generates directions that are adaptive with respect to the last successful or unsuccessful generation. The code was run in Matlab. Tables \ref{tab:SF_value}, \ref{tab:Estimation} and Figure \ref{fig:boxplot} summarize the scaling factors, antibody shifts, and estimated model parameters as well as virus offsets.  The main objective is either to validate the model and to discuss the heterogeneity of patients's viral kinetics. We note that our curves have the same shape as the data, Figure \ref{fig:evo_revVA_est}. In addition, parameter estimation provides very low values of $f_{ij}$, Figure \ref{fig:boxplot} and Tables \ref{tab:SF_value}, \ref{tab:Estimation}, showing that the two antibodies are in very strong competition (as the fraction of simultaneous binding is very low), which contributes to high virus levels. Whereas for patient 102, the values of $\tilde{\nu}_{ij}$ range from 0 to 0.5, for patient 233 they range from 0.6 to 1. In both cases, the relative viral activity due to synergistic binding is low, but more so for the first patient. In particular, the set of parameters corresponding to the best solution of the GA is represented by the symbol {\color{black}$\times$} in Figure \ref{fig:boxplot}. The model assumes a single immunity variable that starts immediately after virus exposure. But, natural infection by dengue virus triggers a fast antiviral innate immune response and a later adaptive immune response that resolves infection and leads to long-term immunological memory.

\section{Discussion}

We developed a formula that gives the virus activity when bound to two antibodies, using the principle of chemical reactions. It generalizes the work of \cite{einav_when_2020} by assuming that the antibodies can bind to multiple sites on a receptor. We have thus been able to show that this antigen-antibody complex can neutralize or enhance virus activity, given the interaction - competition or cooperation - between antibodies. The viral activity was then integrated into a model that emulates the interaction between healthy target cells, infected cells, viruses, and antibodies. This allows us to study the dynamic process of immune response when challenged by sequential homologous or heterologous viruses. We highlighted two scenarios: a mixture of two neutralizing antibodies (Figure \ref{fig:evo_rev1}) and a mixture of enhancing and neutralizing antibodies (Figure \ref{fig:evo_rev2}). Other situations not considered in this study, such as the mixture of two types of enhancing antibodies, could be envisaged. This can happen, for example, when using therapeutic antibodies. In this case, $p_i$, $i=1,2$, can also be a function of time. We could also generalize the model to the case where there are more than two antibody types, as in polyclonal antibody therapy, but the number of parameters would increase rapidly, making parametrization, analysis, and interpretation of the model a challenge. By keeping the model as simple as possible, we were able to explore it analytically and numerically and assert that virus neutralization is associated with purely independent binding (Figures \ref{fig:evo_rev1}(d) and \ref{fig:evo_rev2}(d)), while enhanced virus activity is expected when purely competitive binding occurs (Figure \ref{fig:evo_rev2}(b)). This opens up promising modeling prospects for the study of second viral infections, vaccine efficacy, or antibody treatment. Other potential applications, such as transcriptional gene regulation, \cite{cambon_thermodynamic_2022}, or any field where there is an interaction between activation and inhibition processes, could be favored by this type of new modeling. If the condition \eqref{Hyp:Reduction} and/or the assumption in Proposition \ref{P2} are relaxed, then we could have several endemic equilibrium points even in the parameter space where $\mathcal{R}_0<1$. It would be interesting to study their biological significance and asymptotic behavior, given that the latter depends on initial conditions such as the size of the virus inoculum and the amount of each antibody present in the mixture. Another challenge for our model is to confront it with biological data for dengue, influenza, Covid-19, or HIV. Preliminary tests carried out here on data from dengue patients are promising. The data concern secondary heterologous infection, and cover 5 or 6 days after the onset of symptoms; they therefore start around the time when virus titers reach their peak. For each patient, the model captures the range of virus titers and their rapid decrease after the peak. Rapid increases in IgG and IgM titers are also observed (Figure \ref{fig:evo_revVA_est}). Furthermore, for the set of data explored here, the antigen-antibody complex formed includes antibodies with a very low synergistic neutralizing effect (Figure \ref{fig:boxplot}). This corresponds to high viral activity. Finally, the estimated parameters show that the mortality of infected cells $\gamma_Y$ is almost twice as high as that of susceptible cells and that the heterogeneity of patients' viral kinetics is associated with variability of antibody responses between individuals, measured by the parameters $f_{ij}$, $\tilde{\nu}_{ij}$ and $\delta$ (Table \ref{tab:Estimation}). This is in agreement with previous works showing that antibody kinetics shape virus dynamics acting either on free virus or infected cells \cite{clapham_modelling_2016}. Other mechanisms not considered here, such as cell-mediated immune responses, may also be associated with the lysis of virus-infected cells. In addition, although there are only four antigenically distinct serotypes of the dengue virus, each serotype is divided into several genotypes. This factor, combined with genetics and the immune status of the host, may explain the kinetics of viral load and antibodies, \cite{Guzman_2016}.
\end{multicols}
%\vspace{5cm}
\begin{small}
\begin{table}[ht] 
    \centering{
      \caption{Parameter of the mathematical model with their descriptions, values and units. Some parameters are varied among simulations and are distinguished: Figure \ref{fig:evo_rev1} $\square$, Figure \ref{fig:evo_rev2} $\boxtimes$, and Figure \ref{fig:evo_revVA_est} $\boxdot$. The baseline parameters are taken from the literature (see \cite{adimy_maternal_2020, camargo_modeling_2021}, for further details).}
        \label{tab:Parameters_value}  

\begin{tabular}{p{0.1\textwidth}p{0.5\textwidth}p{0.3\textwidth}}
     Parameter& Description& Value and Units \\\hline\hline
     $n_1,n_2$& Number of epitopes that can be recognized by $A_1$, $A_2$&3\\
     $\Omega$& Production rate of susceptible target cells&  $6.7 \times 10^3$ cells ml$^{- 1}$ days$^{-1}$\\ 
     $\Phi$& Production rate of virus by infected cells& $1 \times 10^4$ RNA copies cells$^{-1}$ days$^{-1}$\\
    $\beta$& Virus transmission rate &$8.5\times 10^{-10}$ ml RNA copies$^{-1}$ days$^{-1}$\\    
    $\gamma_X$& Natural mortality rate of target cells &$\log(2)/5$ days$^{-1}$\\
    $\gamma_Y$& Natural mortality rate of infected cells &$\log(2)$  days$^{-1}$ ; $\boxdot$\\
$\gamma_{V}$& Natural mortality rate of virus&$24\log(2)/2.7$ days$^{-1}$\\
$\gamma_{A_1}\,\gamma_{A_2}$& Natural mortality rate of antibodies &$\log(2)/40$ days$^{-1}$\\
$\alpha_1,\,\alpha_2$& Production rate of antibodies $A_1$, $A_2$ by infected cells&$0.4,\, 0.44$ mol cells$^{-1}$ days$^{-1}$\\ 
$\Lambda_1,\, \Lambda_2$& Natural production rate of antibodies $A_1$, $A_2$ &$7.5\times 10^3,\, 0$ mol ml$^{-1}$ days$^{-1}$\\
$(\xi_{11},\xi_{12},\xi_{13})$& Relative activity of the virus-antibody $A_1$ complex & $(1,0.95,0.8)$ $\square$;  $(3,3,3)$ $\boxtimes$\\
$(\xi_{21},\xi_{22},\xi_{23})$& Relative activity of the virus-antibody $A_2$ complex & $(1,0.95,0.8)$ $\square$; $(1/3,1/3,1/3)$ $\boxtimes$\\
$\tilde{\nu}_{ij}$& Ratio between modified and normal relative activities of the virus-antibody complex &0.65 $\square$; 0.4 $\boxtimes$ ; $\boxdot$ \\
 $K_{D_1},\, K_{D_2}$& Dissociation constants &0.8\\
 $\kappa_{ij}$& Ratio between modified and normal dissociation constants&1\\
 $\delta$& Rate of free virus loss by other means than natural mortality&0.8 days$^{-1}$ ; $\boxdot$\\
$f_{ij}$& Fraction of simultaneous binding of antibodies $A_1$, $A_2$ & $0,\,10^{-6},\,1$ $\square$ ; $0,\,10^{-3},\,1$  $\boxtimes$ ; $\boxdot$  \\ 
\hline
\end{tabular}}
\end{table}
\end{small}

\begin{table}[ht]
    \centering
         \caption{Fraction of simultaneous binding $f_{ij}$, ratio between modified and normal relative activities $\tilde{\nu}_{ij}$, rate of loss of free viruses by means other than natural mortality $\delta$,  natural mortality of infected cells $\gamma_Y$, time offset values $t_V$, $t_{IgG}$ and $t_{IgM}$, and scaling factors $SF_{IgG}$ and $F_{IgM}$, for each individual, used for Figure \ref{fig:evo_revVA_est}. The values of $f_{ij},\, \tilde{\nu}_{ij}$, $\delta$, $\gamma_Y$ and $t_V$ are the results of the Genetic Algorithm.}
        \label{tab:SF_value}    
\begin{tabular}{cccccccccc}
     Patient ID&$f_{ij}$&$\tilde{\nu}_{ij}$&$\delta$&$\gamma_Y$&$t_V$&$SF_{IgG}$&$t_{IgG}$&$SF_{IgM}$&$t_{IgM}$  \\\hline\hline
     102&0&0.47&2.13&2.87&5.2& $1.83\times 10^4 $&1&$3.88\times 10^4$&0.5\\
     105&$8.01\times10^{-19}$&1&0.69&2.73 &4.77&2.00$\times 10^4$&-0.5&$2.36\times 10^4$&-0.25\\
     107&$5.65\times10^{-20}$&0.43&1.29&2.62&5.04&$1.75\times 10^4$&0&$8.96\times 10^4$&0.5\\
     115&$1.74\times10^{-16}$&0.40&0.61&2.75&5.31&$1.92\times 10^4$&-2&$  1.62\times 10^4$&-1.5\\
     207&$2.68\times10^{-19}$&0.61&0.11&2.86&3.35&$3.07\times 10^4$&-0.5&$ 2.00\times 10^4$&1\\
     217&$1.28\times10^{-16}$&0.83&0.45&2.95&4.22&$2.50\times 10^4$&0&$2.64\times 10^4$&0\\
     224&$5.08\times10^{-19}$&0.46&0.96&2.98&4.66&$2.23\times 10^4$&-1&$2.46\times 10^4$&0\\
     233&$2.68\times10^{-19}$&0.93&0.09&2.86&2.28&$2.39\times 10^4$&-1&$3.46\times 10^4$&-1\\\hline 
\end{tabular} 
\end{table}
\begin{table}[ht]

    \centering
        \caption{Summary of the parameter values obtained by the Genetic Algorithm (GA) (see Table \ref{tab:SF_value}). In addition to parameter descriptions, we provide median, minimum and maximum values. }
        \label{tab:Estimation}  
        
\begin{tabular}{lll}
      Parameter& Description& Median [min, max]   \\\hline\hline
      $f_{ij}$& Fraction of simultaneous binding&$3.88\times10^{-19}$[$0,\, 1.74\times10^{-16}$] \\ 
       $\tilde{\nu}_{ij}$& Ratio between modified and normal relative activities&0.54 [0.39, 1]\\
     $\delta$& Rate of loss of free viruses by means other than natural mortality &0.65 [0.0882, 2.13] days$^{-1}$\\ 
    $\gamma_Y$& Natural mortality of infected cells &2.86 [2.61, 2.98] days$^{-1}$\\
  $t_V$&Time between onset of infection and first sampling &4.71 [2.27, 5.31] days
     \\\hline 
\end{tabular}
\end{table}

\newpage

\appendix

\section{Appendix}\label{Appendix1}

\subsection{Proofs of Theorems }\label{ProofsTh}

\begin{proof}[\proofname\ of Theorem \ref{T1}]
Let $J(P_0)$ be the Jacobian matrix associated with the system \eqref{Newmodel2}, evaluated at the equilibrium point $P_0$. Then, we have
\begin{equation*}
    J(P_{0}) = 
    \left(\begin{array}{ccccc}
        -\gamma_{A_1} & 0 & 0 & p_{1}'(0) & 0\\
        0 & -\gamma_{A_2} & 0 & p_{2}'(0) & 0 \\
        0 & 0 & -\gamma_{X} & 0 & -\beta G^{\star} X^{\star}\\
        0 & 0 & 0 & -\gamma_{Y} & \beta G^{\star} X^{\star} \\
        0 & 0 & 0 & \Phi & - \left(\gamma_V+\delta\right)
    \end{array}\right),
\end{equation*}
with $A_{1}^{\star} = \dfrac{p_{1}(0) }{\gamma_{A_1}}$, $A_{2}^{\star}=\dfrac{p_{2}(0) }{\gamma_{A_2}}$, $X^{\star} = \dfrac{\Omega}{\gamma_{X}}$ and $G^{\star}=G\left(A_{1}^{\star},A_{2}^{\star}\right)$. Then, the characteristic equation of $J(P_{0})$ is
\begin{equation*}
        \left|\begin{array}{ccccc}
        \lambda + \gamma_{A_1} & 0 & 0 & -p_{1}'(0) & 0\\
        0 & \lambda + \gamma_{A_2} & 0 & -p_{2}'(0) & 0 \\
        0 & 0 & \lambda +\gamma_{X} & 0 & \beta G^{\star} X^{\star}\\
        0 & 0 & 0 & \lambda + \gamma_{Y} & - \beta G^{\star} X^{\star} \\
        0 & 0 & 0 & -\Phi & \lambda +\left(\gamma_V+\delta\right)
    \end{array}\right| = 0.
\end{equation*}
The eigenvalues of $J(P_0)$ are $\lambda = -\gamma_{A_1}$, $\lambda = -\gamma_{A_2}$, $\lambda = -\gamma_X$  and the roots of the polynomial
\begin{equation*}
    (\lambda +\gamma_{Y})(\lambda  + \gamma_V+\delta) - \Phi \beta G^{\star} X^{\star}= 0.
\end{equation*}
We obtain the following polynomial
\begin{equation}
    \lambda^{2} + \left( \gamma_{V} +\delta+ \gamma_{Y} \right)\lambda + \gamma_{Y}\left(\gamma_V+\delta\right)- \Phi \beta G^{\star} X^{\star}= 0.
    \label{poly}
\end{equation}
According to the Routh-Hurwitz criterion, a polynomial of degree two has roots with a negative real part if and only if the coefficients $a_1$ and $a_2$ of the polynomial $\lambda^{2} + a_{1}\lambda + a_{2}$ are positive. For the polynomial \eqref{poly}, we have
\begin{eqnarray*}
    a_{1} &=&  \gamma_{V}+\delta+ \gamma_{Y} > 0,\\
    a_{2} &=& \gamma_{Y} \left(\gamma_V+\delta\right)-  \Phi \beta G^{\star} X^{\star},\\ &=& \gamma_{Y}\left(\gamma_V+\delta\right)( 1 - \mathcal{R}_{0}).
\end{eqnarray*}
Then, $a_{2} > 0$ if and only if $\mathcal{R}_{0} < 1.$
Therefore, the roots of the polynomial \eqref{poly} have negative real parts if and only if $\mathcal{R}_{0} < 1$. We conclude that the disease-free equilibrium $P_{0}$ is locally asymptotically stable if $\mathcal{R}_{0} < 1$, and unstable if $\mathcal{R}_{0} > 1$.
\end{proof}

%\textbf{Proof of Theorem \ref{T2}}
\begin{proof}[\proofname\ of Theorem \ref{T2}]
Again, we take System \eqref{Newmodel2GlobStab}. Let first consider the case $\mu=-1$. Then, we have $\dfrac{\Omega}{\gamma} +\mu Z=X+Y\leq \dfrac{\Omega}{\gamma}$. Therefore, System \eqref{Newmodel2GlobStab} satisfies the following inequalities on $\mathbb{D} \times \mathbb{R}^3_+$
\begin{equation*}
	\left\{\begin{aligned}
		\dfrac{dB_{1}}{d t} & \leq p'_{1}(0)Y - \gamma_{A_1}B_{1}, \\
		\dfrac{dB_{2}}{d t} & \leq p'_{2}(0)Y - \gamma_{A_2}B_{2}, \\
		\dfrac{d Z}{d t} & =  - \gamma Z , \\
		\frac{d Y}{d t} &\leq\beta G\left(\dfrac{p_1(0)}{\gamma_{A_1}},\dfrac{p_2(0)}{\gamma_{A_2}}\right)\dfrac{\Omega}{\gamma}V - \gamma Y, \\
		\dfrac{d V}{d t} & =\Phi Y - \left(\gamma_V+\delta\right)V.
	\end{aligned}\right.
	\label{Newmodel2GlobStab:Compar}
\end{equation*}
Then, System \eqref{Newmodel2GlobStab} can be compared to the following linear system 
\begin{equation}\label{Newmodel2GlobStab:Compar Lin}
\left\{\begin{aligned}
		\dfrac{dB_{1}}{d t} & = p'_{1}(0)Y - \gamma_{A_1}B_{1}, \\
		\dfrac{dB_{2}}{d t} & = p'_{2}(0)Y - \gamma_{A_2}B_{2}, \\
		\dfrac{d Z}{d t} & =  - \gamma Z , \\
		\frac{d Y}{d t} &=\beta G\left(\dfrac{p_1(0)}{\gamma_{A_1}},\dfrac{p_2(0)}{\gamma_{A_2}}\right)\dfrac{\Omega}{\gamma}V - \gamma Y, \\
		\dfrac{d V}{d t} & =\Phi Y - \left(\gamma_V+\delta\right)V.
	\end{aligned}\right.
\end{equation}
This last system has the following characteristic equation
\begin{equation*}         
        \left|\begin{array}{ccccc}
        \lambda + \gamma_{A_1} & 0 & 0 & -p_{1}'(0)  & 0\\
        0 & \lambda + \gamma_{A_2} & 0 & -p_{2}'(0) & 0 \\
        0 & 0 & \lambda +\gamma & 0 &0\\
        0 & 0 & 0 & \lambda + \gamma & S\\
        0 & 0 & 0 & -\Phi & \lambda +\gamma_V+\delta
    \end{array}\right| = 0.
\end{equation*}
with $S:=-\beta G\left(p_{1}(0)/\gamma_{A_1},p_{2}(0)/\gamma_{A_2}\right)\Omega/\gamma$.
It is equivalent to 
\begin{equation}
(\lambda+\gamma_{A_1})(\lambda+\gamma_{A_2})(\lambda+\gamma)\times\left[\lambda^2+\lambda(\gamma+\gamma_V+\delta)+\gamma\left(\gamma_V+\delta\right)(1-\mathcal{R}_0)\right]=0.
\end{equation}
The eigenvalues are $\lambda=-\gamma_{A_1}<0$, $\lambda=-\gamma_{A_2}<0$, $\lambda=-\gamma<0$ and the roots of the polynomial
\[\lambda^2+(\gamma+\gamma_V+\delta)\lambda+\gamma\left(\gamma_V+\delta\right)(1-\mathcal{R}_0).\]
Using the Routh-Hurwitz criteria, we find that the roots of this last polynomial have negative real parts if and only if $\mathcal{R}_0<1$. Thus, we obtain the global asymptotic stability of the trivial equilibrium of the linear system \eqref{Newmodel2GlobStab:Compar Lin}. We now use the following comparison result to conclude (Lemma 1 of \cite{camargo_modeling_2021} and \cite{Kirkilionis_comparison_2004}). 
\begin{lemma}\label{L2}
Consider two differential systems $x' = h(x)$ and $y' = g(y)$ given on an invariant subset $U$ of $\mathbb{R}^k$, $h$, $g : U \to \mathbb{R}^k$ are locally Lipschitz functions. Then, the following two conditions are equivalent: 
\begin{enumerate}
    \item  For each $x_0, y_0 \in U$ the inequality $x_0 \leq y_0$ implies $x(t) \leq y(t)$ for all $t \geq 0$, where $x'(t) = h (x(t))$ and $y'(t) = g(y(t))$, $t \geq 0$, with $x(0) = x_0$, $y(0) = y_0$.
    \item For all $ i=1, \dots , k$, the inequality 
    \begin{equation*}
       h_i(x_1, \dots,x_{i-1},x_i,x_{i+1},\dots, x_n)\quad\leq \quad g_i(\Bar{x}_1,\dots,\Bar{x}_{i-1},\Bar{x}_i,\Bar{x}_{i+1},\dots,\Bar{x}_n)
    \end{equation*}
    holds whenever $x_j\leq \Bar{x}_j$, 
    for all $j \neq i$ and $x_i=\bar{x}_i$.
\end{enumerate}\label{lem:comparison}
\end{lemma}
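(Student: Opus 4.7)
The plan is to establish the equivalence by proving each implication separately; all the substance sits in the direction $(2)\Rightarrow(1)$, which is a version of the classical Kamke--M\"uller comparison principle for quasi-monotone vector fields.

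The direction $(1)\Rightarrow(2)$ I would dispatch first as a warm-up. Fixing any index $i$ and any two points $x,\bar x\in U$ with $x_j\leq \bar x_j$ for $j\neq i$ and $x_i=\bar x_i$, I would take initial data $x_0=x$ and $y_0=\bar x$; condition (1) then forces $x_i(t)\leq y_i(t)$ for all $t\geq 0$, where $x(\cdot),y(\cdot)$ solve $x'=h(x)$ and $y'=g(y)$ respectively. Since $x_i(0)=y_i(0)$, the right derivative of $y_i-x_i$ at $t=0$ must be nonnegative, and this rearranges to $g_i(\bar x)\geq h_i(x)$, i.e.\ condition (2).

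For the harder direction $(2)\Rightarrow(1)$, my plan is a strict-perturbation argument. Writing $\mathbf{1}=(1,\dots,1)^{\top}\in\mathbb{R}^{k}$, for each $\varepsilon>0$ small enough I would introduce the perturbed flow $y^{\varepsilon}$ solving $y'=g(y)+\varepsilon\mathbf{1}$ with $y^{\varepsilon}(0)=y_{0}+\varepsilon\mathbf{1}$, so that $y^{\varepsilon}(0)>x(0)$ strictly in every component. The key claim is that $y^{\varepsilon}(t)>x(t)$ componentwise throughout the common lifespan. Suppose otherwise; then continuity of $y^{\varepsilon}-x$ together with initial strict positivity provides a first time $t^{\star}>0$ and an index $i$ with $y^{\varepsilon}_{i}(t^{\star})=x_{i}(t^{\star})$ while $y^{\varepsilon}_{j}(t^{\star})\geq x_{j}(t^{\star})$ for every $j$. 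Applying hypothesis (2) with $\bar x=y^{\varepsilon}(t^{\star})$ and $x=x(t^{\star})$ yields $g_{i}(y^{\varepsilon}(t^{\star}))\geq h_{i}(x(t^{\star}))$, hence
\[\frac{d}{dt}\bigl(y^{\varepsilon}_{i}-x_{i}\bigr)(t^{\star})=g_{i}(y^{\varepsilon}(t^{\star}))+\varepsilon-h_{i}(x(t^{\star}))\geq \varepsilon>0,\]
which contradicts the fact that $y^{\varepsilon}_{i}-x_{i}$ is strictly positive on $[0,t^{\star})$ and vanishes at $t^{\star}$ (its left derivative at $t^{\star}$ must be $\leq 0$).

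To finish, I would take $\varepsilon\to 0$ and appeal to continuous dependence of ODE solutions on initial data and right-hand side, guaranteed by local Lipschitz continuity of $g$: on any compact subinterval of the common lifespan, $y^{\varepsilon}\to y$ uniformly, and the componentwise strict inequalities $y^{\varepsilon}(t)>x(t)$ pass to the nonstrict inequality $y(t)\geq x(t)$, giving condition (1). The main obstacle will be the careful justification of the strict-positivity step, namely that $t^{\star}$ is attained as an actual minimum (not merely an infimum) and that $y^{\varepsilon}$ stays inside a neighborhood where $g$ is Lipschitz long enough for the perturbation to be meaningful; restricting to compact time subintervals on which both the original and perturbed flows remain well-defined, and exploiting the invariance of $U$ under $y'=g(y)$ together with continuity in $\varepsilon$, resolves both points.
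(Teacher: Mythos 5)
The paper itself does not prove this lemma: it is imported verbatim as ``Lemma 1 of \cite{camargo_modeling_2021} and \cite{Kirkilionis_comparison_2004}'' and used as a black box inside the proof of Theorem \ref{T2}, so there is no in-paper argument to compare yours against. Your proposal is the classical Kamke--M\"uller quasimonotone comparison argument, and in substance it is correct: the $(1)\Rightarrow(2)$ direction via the sign of the right derivative of $y_i-x_i$ at $t=0$ is fine, and the $(2)\Rightarrow(1)$ direction via the strictly perturbed flow $y'=g(y)+\varepsilon\mathbf{1}$, the first touching time $t^{\star}$, and the sign contradiction on the derivative of $y^{\varepsilon}_{i}-x_{i}$ at $t^{\star}$ is the standard and correct route, as is the passage $\varepsilon\to 0$ by continuous dependence.

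The one point you flag but do not actually close is the only real gap: for hypothesis (2) to be applicable at the touching time you need $y^{\varepsilon}(t^{\star})\in U$, and more basically you need $g(y^{\varepsilon}(t))$ to be defined along the whole perturbed trajectory. Invariance of $U$ under $y'=g(y)$ does not by itself give invariance under $y'=g(y)+\varepsilon\mathbf{1}$ for a general invariant subset $U$ of $\mathbb{R}^{k}$, so as stated the argument has a hole in that generality; the usual fixes are to extend $g$ to a locally Lipschitz map on a neighbourhood of $U$, or to replace the perturbed flow by the comparison function $y(t)-\varepsilon e^{Lt}\mathbf{1}$ with $L$ a local Lipschitz constant, which avoids leaving $U$ altogether. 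In the situation the paper actually uses (where $U$ is the nonnegative orthant and the inward perturbation $+\varepsilon\mathbf{1}$ only reinforces invariance) the issue is vacuous, so your proof covers the application even if not the lemma in full stated generality.
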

As our corresponding functions $f$ and $g$ are of class $\mathcal{C}^1$, they are locally Lipschitz and we can apply Lemma \ref{L2}. Therefore, if we chose the same initial condition on $\mathbb{D} \times \mathbb{R}^3_+$, we obtain a solution of System \eqref{Newmodel2GlobStab} between $0_{\mathbb{R}^5}$ and the solution of System \eqref{Newmodel2GlobStab:Compar Lin}. Hence, the disease-free steady-state $P_0$ is globally asymptotically stable on the subset $\mathbb{D} \times \mathbb{R}^3_+$.

We continue the proof of Theorem \ref{T2} by now considering the case $\mu=1$. Using the equation of $Z$, we can show for all $\epsilon>0$ and all $Z_0>0$, the existence of $T_\epsilon \geq 0$, we can take $T_\epsilon=\text{max}\left\{0, \dfrac{1}{\gamma}\text{ln}\left(\dfrac{Z_0}{\epsilon}\right)\right\}$, such that $0<Z(t)<\epsilon$, for all $t>T_\epsilon$. We then compare System \eqref{Newmodel2GlobStab}, for $t>T_\epsilon$, to the following linear system 
\begin{equation}
	\left\{\begin{aligned}
		\dfrac{dB_{1}}{d t} & =p'_{1}(0)Y- \gamma_{A_1}B_{1}, \\
		\dfrac{dB_{2}}{d t} & = p'_{2}(0)Y - \gamma_{A_1}B_{2}, \\
		\dfrac{d Z}{d t} & =  - \gamma Z , \\
		\frac{d Y}{d t} &=\beta G\left(\dfrac{p_{1}(0)}{\gamma_{A_1}},\dfrac{p_{2}(0)}{\gamma_{A_2}}\right)\left(\dfrac{\Omega}{\gamma}+\epsilon\right)V - \gamma Y, \\
		\dfrac{d V}{d t} & = \Phi Y - \left(\gamma_V+\delta\right)V.
	\end{aligned}\right.
	\label{Newmodel2GlobStab:Compar Lin2}
\end{equation}
Its characteristic equation is given by
\begin{equation}
0=(\lambda+\gamma_{A_1})(\lambda+\gamma_{A_2})(\lambda+\gamma)\times \left[(\lambda^2+\lambda(\gamma+\gamma_V)+\gamma\gamma_V(1-\mathcal{R}_\epsilon)\right],
\end{equation}
where 
\[\mathcal{R}_\epsilon:=\dfrac{\Phi \beta}{\gamma\left(\gamma_V+\delta\right)}\left(\dfrac{\Omega}{\gamma}+\epsilon\right)G\left(\dfrac{p_{1}(0)}{\gamma_{A_1}},\dfrac{p_{2}(0)}{\gamma_{A_2}}\right).\]
Under the condition $\mathcal{R}_0<1$ and by choosing $\epsilon>0$ small enough, we have $\mathcal{R}_\epsilon<1$. Then, we obtain the global asymptotic stability of the trivial solution of the linear system \eqref{Newmodel2GlobStab:Compar Lin2}. So, using again Lemma \ref{lem:comparison}, we conclude the global asymptotic stability of the disease-free steady-state $P_0$ on the set $\mathbb{D} \times \mathbb{R}^3_+$. So far, we have considered the solutions of System \eqref{Newmodel2} that start in the set $\mathbb{D} \times \mathbb{R}^3_+$. Now consider a nonnegative solution such that $A_1(0)<\frac{p_{1}(0)}{\gamma_{A_1}}$ and/or $A_2(0)<\frac{p_{2}(0)}{\gamma_{A_2}}$, which means that it does not start in the set $\mathbb{D} \times \mathbb{R}^3_+$. We have two different cases. \\
Firstly, suppose that we have the existence of $\Bar{t}_i>0$, $i=1,2$, such that $A_i(\Bar{t}_i)=\frac{p_{i}(0)}{\gamma_{A_i}}$ and $A_i(t)<\frac{p_{i}(0)}{\gamma_{A_i}}$ for all $t<\Bar{t}_i$. From the equation of $A_i$ in System \eqref{Newmodel2} and under \eqref{Hyp:Condition_f}, we can see that $A'_i(t)\geq p_{i}(Y(t))-p_{i}(0)\geq 0$, for $t\leq \Bar{t}_i$. This means that $A_i$ is an increasing function on $[0,\Bar{t}_i]$. Hence, the solution enters the subset $\mathbb{D} \times \mathbb{R}^3_+$ at time $\Bar{t}=\underset{i=1,2}{max}(\Bar{t}_i)$. As $\mathbb{D} \times \mathbb{R}^3_+$ is invariant for the system \eqref{Newmodel2}, the solution tends to $P_0$, because $P_0$ is globally asymptotically stable on the set $\mathbb{D} \times \mathbb{R}^3_+$.\\
Secondly, we suppose that there is no $\Bar{t}_i$ as described above (for $i=1$ or/and $i=2$). We consider that for all $t\geq 0$, $A_i(t)<\frac{p_{i}(0)}{\gamma_{A_i}}$. Then, $A_i'(t)>0$. This means that $A_i$ is increasing on $[0,+\infty)$ and admits a limit $l_i:=\underset{t\to+\infty}{\text{lim}}A_i(t)\leq\frac{p_{i}(0)}{\gamma_{A_i}}$. Furthermore, $\underset{t \to +\infty}{\text{lim}}A'_i(t)=0$. Then, we can see that 
\[\underset{t\to+\infty}{\text{lim}}p_i(Y(t))=\gamma_{A_i} l_i \leq p_i(0).\]
But, according \eqref{Hyp:Condition_f'(0)}, $p_i$ ($i=1$ or $i=2$) is an increasing function in a neighbourhood of $0$. So, we conclude that 
\[\underset{t\to+\infty}{\text{lim}}Y(t)=0 \quad \text{and} \quad \underset{t\to+\infty}{\text{lim}}A_i(t)=l_i=\dfrac{p_i(0)}{\gamma_{A_i}}.\]
We have shown that $\mathbb{D} \times \mathbb{R}^3_+$ is globally attractive for System \eqref{Newmodel2}. We can now conclude that the disease-free steady-state $P_0$ is globally asymptotically stable.
\end{proof}

%\textbf{Proof of Theorem \ref{Thm:exist_P*}}
\begin{proof}[\proofname\ of Theorem \ref{Thm:exist_P*}]
	An endemic equilibrium $P^{\star}=(A_{1}^{\star},A_{2}^{\star},X^{\star},Y^{\star},V^{\star})$ satisfies
\begin{equation}
		\left\{\begin{array}{ll}
			\gamma_{A_1} A_{1}^{\star}
			& = p_{1} (Y^{\star} ),\\
			\gamma_{A_2} A_{2}^{\star} 
			& =p_{2} (Y^{\star} ),\\
			\gamma_{X} X^{\star}+  \beta {G}\left(A_{1}^{\star}, A_{2}^{\star}\right)V^{\star}X^{\star} & = \Omega, \\
			\beta  {G}\left(A_{1}^{\star}, A_{2}^{\star}\right)V^{\star}X^{\star} & =\gamma_{Y} Y^{\star}, \\
			 \left(\gamma_V+\delta\right)V^{\star} & =\Phi Y^{\star}.
		\end{array}\right.
		\label{endem}
	\end{equation}
Subtracting the third from the fourth equation of the \eqref{endem} system, we get,
\begin{eqnarray*}
    \Omega - \gamma_{X}X^{\star} - \gamma_{Y}Y^{\star} = 0,
\end{eqnarray*} which results in
\begin{eqnarray}
    X^{\star} = \dfrac{\Omega - \gamma_{Y}Y^{\star}}{\gamma_{X}}, \quad \text{with} \quad 0 < Y^{\star} \leq\dfrac{\Omega}{\gamma_{Y}}.
\end{eqnarray}
From the first equation of the system, we obtain,
\begin{equation*}
    A_{1}^{\star} = \dfrac{1}{\gamma_{A_1}}p_{1}(Y^{\star}).
\end{equation*}
From the second equation of the system \eqref{endem}, we get
\begin{equation*}
    A_{2}^{\star} = \dfrac{1}{\gamma_{A_2}}p_{2}(Y^{\star}).
\end{equation*}
We can see that $\left(A_{1}^{\star},A_{2}^{\star}\right) \in \mathbb{D}$, then $ P^{\star} \in \mathbb{D}\times \mathbb{R}^3_+$. From the last equation of System \eqref{endem}, we get
\begin{equation}
    V^{\star} = \dfrac{\Phi Y^{\star}}{\gamma_V+\delta}.
\end{equation}
Substituting the equations of $X^{\star}$ and $V^{\star}$ into the fourth equation of the system \eqref{endem} we get
\begin{eqnarray*}
    	\beta G(A_{1}^{\star},A_{2}^{\star})\left(\dfrac{\Phi Y^{\star}}{ \gamma_V+\delta}\right)\dfrac{\Omega - \gamma_{Y}Y^{\star}}{\gamma_{X}} =  \gamma_{Y}Y^{\star}.
\end{eqnarray*}
As we are looking for $Y^{\star}>0$, the previous equation is equivalent to
\begin{eqnarray*}
    	\dfrac{\beta \Phi}{\gamma_{Y}\gamma_X}\left(\Omega - \gamma_{Y}Y^{\star}\right)G(A_{1}^{\star},A_{2}^{\star})\left(\dfrac{1}{ \gamma_V+\delta}\right)  =1  
\end{eqnarray*}
We consider the function $I : \left[0,\dfrac{\Omega}{\gamma_{Y}} \right] \rightarrow \left[ 0, +\infty\right) $ defined by
\begin{eqnarray*}
    I(Y^{\star}) = \dfrac{1}{\gamma_{Y}}\times\dfrac{\Omega - \gamma_{Y}Y^{\star}}{\gamma_{X}}\times\dfrac{\Phi}{ \gamma_V+\delta}\times\beta G\left(\dfrac{p_{1}(Y^{\star})}{\gamma_{A_1}},\dfrac{p_{2}(Y^{\star})}{\gamma_{A_2}}\right).
\end{eqnarray*} 
The problem to solve is the following 
\begin{equation*}
I(Y^{\star}) = 1, \quad \text{with} \;\; 0 \leq Y^{\star} \leq \dfrac{\Omega}{\gamma_{Y}}.
\end{equation*}
We have
\begin{equation*}
    I(0) = \mathcal{R}_0 \quad \text{and} \quad  I\left(\dfrac{\Omega}{\gamma_{Y}}\right) = 0.
\end{equation*} 
Therefore, if $\mathcal{R}_0>1$, there is at least one positive solution $Y^{\star}$.
\end{proof}

\begin{proof}[\proofname\ of Theorem \ref{Thm:local_stab_P*}]
Let $A_1(t)=A_1^{\star}+a_1(t)$, $A_2(t)=A_2^{\star}+a_2(t)$, $X(t)=X^{\star}+x(t)$, $Y(t)=Y^{\star}+y(t)$, $V(t)=V^{\star}+v(t)$. The linearization of System \eqref{Newmodel2} around $P^{\star}$ is given by
     \begin{equation}
	\left\{\begin{aligned}
		\dfrac{da_{1}}{d t}  =&  p'_1(Y^{\star})y - \gamma_{A}a_{1}, \\
		\dfrac{da_{2}}{d t}  =& p'_2(Y^{\star})y - \gamma_{A}a_{2}, \\
		\dfrac{d x}{d t}  = &- \gamma x - \beta G(A_1^{\star},A_2^{\star})(V^{\star}x+X^{\star}v)\\&-\beta G_{A_1}(A_1^{\star},A_2^{\star})V^{\star}X^{\star}a_1-\beta G_{A_2}(A_1^{\star},A_2^{\star})V^{\star}X^{\star}a_2, \\
		\frac{d y}{d t}  =& -\gamma y+  \beta G(A_1^{\star},A_2^{\star})(V^{\star}x+X^{\star}v)\\&+\beta G_{A_1}(A_1^{\star},A_2^{\star})V^{\star}X^{\star}a_1+\beta G_{A_2}(A_1^{\star},A_2^{\star})V^{\star}X^{\star}a_2, \\
		\dfrac{d v}{d t}  =& \Phi y - \left( \gamma_V+\delta\right) v.
	\end{aligned}\right.
	\label{LinearNewmodel2}
\end{equation}
The characteristic equation associated with the previous system is 
\[\begin{vmatrix}
\lambda+\gamma_{A}&0&0&-\alpha_1&0\\
0&\lambda+\gamma_{A}&0&-\alpha_2&0\\
\beta_1V^{\star}X^{\star}&\beta_2V^{\star}X^{\star}&\lambda+\gamma +\beta_0V^{\star}&0&\beta_0X^{\star}&\\
-\beta_1V^{\star}X^{\star}&-\beta_2V^{\star}X^{\star}&-\beta_0V^{\star}&\lambda+\gamma &-\beta_0X^{\star}\\
0&0&0&-\Phi&\lambda+ \gamma_V+\delta
\end{vmatrix}=0,\]
 with
 \[\alpha_i=p'_i(Y^{\star})\geq 0, \quad \beta_i=\beta G_{A_i}(A_1^{\star},A_2^{\star}), \quad i=1,2 \quad \text{and} \quad \beta_0=\beta G(A_1^{\star},A_2^{\star})>0.\]
We add the third and the fourth lines that we substitute to line 4 and  the characteristic equation becomes 
\[\begin{vmatrix}
\lambda+\gamma_{A}&0&0&-\alpha_1&0\\
0&\lambda+\gamma_{A}&0&-\alpha_2&0\\
\beta_1V^{\star}X^{\star}&\beta_2V^{\star}X^{\star}&\lambda+\gamma +\beta_0V^{\star}&0&\beta_0X^{\star}&\\
0&0&\lambda+\gamma &\lambda+\gamma &0\\
0&0&0&-\Phi&\lambda+ \gamma_V+\delta
\end{vmatrix}=0.\] 
The subtraction of columns 3 and 4 that we substitute to column 4 gives,
\[\begin{vmatrix}
\lambda+\gamma_{A}&0&0&-\alpha_1&0\\
0&\lambda+\gamma_{A}&0&-\alpha_2&0\\
\beta_1V^{\star}X^{\star}&\beta_2V^{\star}X^{\star}&\lambda+\gamma+\beta_0V^{\star}&-\lambda-\gamma-\beta_0V^{\star}&\beta_0X^{\star}&\\
0&0&\lambda+\gamma&0&0\\
0&0&0&-\Phi&\lambda+\gamma_V+\delta 
\end{vmatrix}=0.\] 
Developing firstly from the fourth line and secondly from the first column we obtain,
\[-(\lambda+\gamma)\left( (\lambda+\gamma_{A})D_1+\beta_1V^{\star}X^{\star}D_2\right)=0,\]
with 
\[D_1=\begin{vmatrix}
\lambda+\gamma_{A}&-\alpha_2&0\\
\beta_2V^{\star}X^{\star}&-\lambda-\gamma-\beta_0V^{\star}&\beta_0X^{\star}&\\
0&-\Phi&\lambda+ \gamma_V+\delta
\end{vmatrix},\]
and
\[D_2=\begin{vmatrix}
0&-\alpha_1&0\\
\lambda+\gamma_{A}&-\alpha_2&0\\
0&-\Phi&\lambda+ \gamma_V+\delta
\end{vmatrix}.\]
Hence we get,
\begin{eqnarray*}
    D_1=&(\lambda+\gamma_{A})\left[(-\lambda-\gamma-\beta_0V^{\star})(\lambda+ \gamma_V+\delta)+\Phi \beta_0X^{\star}\right]+\alpha_2\beta_2V^{\star}X^{\star}(\lambda+ \gamma_V+\delta),
\end{eqnarray*}
and 
\[D_2=\alpha_1(\lambda+\gamma_{A})\left(\lambda+ \gamma_V+\delta\right).\]
Then, we get the following characteristic equation
\[-(\lambda+\gamma) (\lambda+\gamma_{A})\left(D_1+\alpha_1\beta_1V^{\star}X^{\star}\left(\lambda+ \gamma_V+\delta\right)\right)=0,\]
 $\lambda=-\gamma<0$ and $\lambda=-\gamma_A<0$ are two eigenvalues. The others eigenvalues are solution of 
 \[D_1+\alpha_1\beta_1V^{\star}X^{\star}\left(\lambda+ \gamma_V+\delta\right)=0.\]
This corresponds to 
 \begin{eqnarray*}
    (\lambda+\gamma_{A})\left[(\lambda+\gamma+\beta_0V^{\star})(\lambda+ \gamma_V+\delta)-\Phi \beta_0X^{\star}\right]-\alpha_1\beta_1V^{\star}X^{\star}\left(\lambda+ \gamma_V+\delta\right)-\alpha_2\beta_2V^{\star}X^{\star}(\lambda+ \gamma_V+\delta)=0.
 \end{eqnarray*}
 Developing this last equation, we obtain
 \begin{eqnarray*}
 &&\lambda^3+\lambda^2\left[\gamma_{A}+\gamma+\gamma_{V}+\delta+\beta_0V^{\star}\right]\\
 &+&\lambda \left[\gamma_{A}\left(\gamma+\gamma_{V}+\delta+\beta_0V^{\star}\right)+(\gamma_{V}+\delta)(\gamma+\beta_0V^{\star})-\Phi \beta_0X^{\star}-\alpha_1 \beta_1V^{\star}X^{\star}-\alpha_2\beta_2V^{\star}X^{\star}\right]\\
 &+&\gamma_{A}(\gamma_{V}+\delta)(\gamma+\beta_0V^{\star})-\gamma_A\Phi \beta_0X^{\star}-\alpha_1 \beta_1V^{\star}X^{\star}(\gamma_{V}+\delta)-\alpha_2 \beta_2V^{\star}X^{\star}(\gamma_{V}+\delta) \qquad\quad= 0.
  \end{eqnarray*}
 That we rewrite
\begin{equation}
    \lambda^3 +k_1\lambda^2+k_2 \lambda+k_3=0,
    \label{eq:characteristic}
\end{equation}
with
\begin{eqnarray*}
    k_1&=&\gamma_A + \gamma_V+\delta+\gamma+\beta_0V^{\star},\\
    k_2&=&\gamma_A\left( \gamma_V+\delta+\gamma+\beta_0V^{\star}\right)+( \gamma_V+\delta)(\gamma+\beta_0V^{\star})-\Phi \beta_0X^{\star}-(\alpha_1\beta_1+\alpha_2\beta_2)V^{\star}X^{\star},\\
    k_3&=&\gamma_A\left(( \gamma_V+\delta)(\gamma+\beta_0V^{\star})-\Phi  \beta_0X^{\star} \right) -(\alpha_1\beta_1+\alpha_2\beta_2)( \gamma_V+\delta)V^{\star}X^{\star}.
\end{eqnarray*}
According to the Routh-Hurwitz criteria, the endemic steady-state is locally asymptotically stable if and only if the following conditions on the coefficients of Equation \eqref{eq:characteristic} are satisfied
\[k_1>0,\quad k_3>0 \quad \text{and}\quad k_1k_2>k_3.\]
It is clear that we have always
\begin{eqnarray*}
     k_1&=&\gamma_A + \gamma_V+\delta+\gamma+\beta_0V^{\star}>0.
\end{eqnarray*}
We recall the equations that come from the steady state equation \eqref{endem}
\[\gamma_{A} A_{1}^{\star}= p_1(Y^{\star}), \quad \gamma_{A} A_{2}^{\star} = p_2(Y^{\star}), \quad \gamma X^{\star}+ \beta_0V^{\star}X^{\star}  = \Omega, \quad \beta_0V^{\star}X^{\star}  =\gamma Y^{\star} \quad \text{and} \quad \left(\gamma_V+\delta\right)V^{\star} =\Phi Y^{\star}. \]
Then, $k_3$ becomes
\begin{eqnarray*}
k_3&=&\gamma_A\left(( \gamma_V+\delta)(\gamma+\beta_0V^{\star})-\Phi  \beta_0X^{\star} \right) -(\alpha_1\beta_1+\alpha_2\beta_2)( \gamma_V+\delta)V^{\star}X^{\star},\\
&=& \gamma_A\Phi\dfrac{Y^{\star}}{V^{\star}}\left(\dfrac{\Omega}{X^{\star}}-\gamma\right)   -(\alpha_1\beta_1+\alpha_2\beta_1)\Phi Y^{\star}X^{\star}.
\end{eqnarray*}
$Y^{\star}>0$ and $V^{\star}>0$, and so $0<X^{\star}<\Omega/\gamma$. Therefore, under the condition \eqref{Condition Derivative G}, we have $\alpha_1 \beta_1+\alpha_2 \beta_2 \leq 0$. Then, we directly get $k_3>0$. Using the above equations, we can also rewrite $k_1$ and $k_2$ as follows 
\[k_1=\gamma_A+\Phi\dfrac{Y^{\star}}{V^{\star}}+\dfrac{\Omega}{X^{\star}},\]
\[k_2=\gamma_A \left(\Phi\dfrac{Y^{\star}}{V^{\star}}+\dfrac{\Omega}{X^{\star}}\right)+\Phi\dfrac{Y^{\star}}{V^{\star}}\left(\dfrac{\Omega}{X^{\star}}-\gamma\right)-(\alpha_1\beta_1+\alpha_2\beta_2)V^{\star}X^{\star}.\]
After some calculations, we obtain
\begin{eqnarray*}
    k_1k_2-k_3&=& \left(\gamma_A+\Phi\dfrac{Y^{\star}}{V^{\star}}\right) \left(\gamma_A\left(\Phi\dfrac{Y^{\star}}{V^{\star}}+\dfrac{\Omega}{X^{\star}} \right)+\Phi\dfrac{Y^{\star}}{V^{\star}}\left(\dfrac{\Omega}{X^{\star}}-\gamma\right)\right)\\
    &&+\dfrac{\Omega}{X^{\star}}\left(\gamma_A\dfrac{\Omega}{X^{\star}} +\Phi\dfrac{Y^{\star}}{V^{\star}}\left(\dfrac{\Omega}{X^{\star}}-\gamma\right)\right)+\gamma\gamma_A\Phi\dfrac{Y^{\star}}{V^{\star}}\\
    &&-(\alpha_1\beta_1+\alpha_2\beta_2)\left(\Omega V^{\star}+\gamma_AV^{\star}X^{\star}\right).
\end{eqnarray*}
Then, Condition \eqref{Condition Derivative G} implies that
\[k_2>0 \quad \text{and} \quad k_1k_2-k_3>0.\]
So, the endemic equilibrium $P^{\star}=(A_{1}^{\star},A_{2}^{\star},X^{\star},Y^{\star},V^{\star})$ is locally asymptotically stable.

\end{proof}

\subsection{Proofs of Propositions}\label{sec:ProofProp}
%\textbf{Proof of Proposition \ref{P1}}
\begin{proof}[\proofname\ of Proposition \ref{P1}]
The regularity of the functions used in the right-hand side of System \eqref{Newmodel2} guarantees the existence and uniqueness of solutions on an interval $[0,T)$, with $T>0$. Let us see first that the solution is nonnegative on $[0,T)$. For all $t \in [0,T)$, whenever $(A_1,A_2,X,Y,V)\in \mathbb{R}^5_{+}$, the derivatives satisfy
\[\left.\dfrac{dA_1}{dt}\right|_{A_1=0} \geq p_1(0) \geq 0, \quad \left.\dfrac{dA_2}{dt}\right|_{A_2=0}\geq p_2(0) \geq 0,\quad\left.\dfrac{dX}{dt}\right|_{X=0} = \Omega>0, \quad\left.\dfrac{dY}{dt}\right|_{Y=0}\geq 0 \quad \text{and}\quad\ \left.\dfrac{dV}{dt}\right|_{V=0}\geq 0.\]

Hence, any solution of System \eqref{Newmodel2} that starts nonnegative remains nonnegative (see Theorem 3.4 in \cite{smith_introduction_2011} and Proposition
B.7 in \cite{smith_chermostat_1995} for more details).

Now, we prove that any solution of System \eqref{Newmodel2} is bounded on $[0,T)$. By adding the equations of $X$ and $Y$, we get 
\[\dfrac{d X}{d t}+\dfrac{d Y}{d t}\leq\Omega - \gamma_{m}(X+Y), \quad \text{with} \;\; \gamma_{m}=\text{min}\{\gamma_{X},\gamma_{Y}\}.\]
Then, 
\begin{equation*}
	\begin{array}{ll}
0\leq X(t)+Y(t)&\leq e^{-\gamma_{m}t}(X(0)+Y(0))+ \dfrac{\Omega}{\gamma_{m}}(1-e^{-\gamma_{m}t}),\\&\leq X(0)+Y(0)+ \dfrac{\Omega}{\gamma_{m}}.
		\end{array}
	\end{equation*}

So, 
\[\underset{t\to+\infty}{\text{lim sup}}(X(t)+Y(t))\leq X(0)+Y(0)+ \dfrac{\Omega}{\gamma_{m}}.\]
Hence, $X$ and $Y$ are bounded on the interval $[0,T)$. Similarly, we find
\begin{equation*}
	\begin{array}{ll}
&\underset{t\to+\infty}{\text{lim sup}}(A_1(t))\leq A_1(0)+\dfrac{p_1(\Bar{Y})}{\gamma_{A_1}}, \\& \underset{t\to+\infty}{\text{lim sup}}(A_2(t))\leq A_2(0)+\dfrac{p_2(\Bar{Y})}{\gamma_{A_2}},
		\end{array}
	\end{equation*}

and
\[\underset{t\to+\infty}{\text{lim sup}} (V(t))\leq V(0) +\dfrac{\Phi \Bar{Y}}{\gamma_V+\delta}, \quad \text{with} \;\; \Bar{Y}=\underset{0\leq s <T}{\text{sup}}(Y(s)).\]
Therefore the solution is bounded on $[0,T)$. This means that the solution is defined and bounded over the whole interval $[0,+\infty)$.
\end{proof}

\begin{proof}[\proofname\ of Proposition \ref{Prop:noEndemic}]
The condition \eqref{Hyp:Reduction} implies that 
\[I(Y^{\star}) \leq \dfrac{\Phi \beta \Omega}{\gamma_{Y}\gamma_{X}\left(\gamma_V+\delta\right)}G\left(\dfrac{p_{1}(0)}{\gamma_{A_{1}}},\dfrac{p_{2}(0)}{\gamma_{A_{2}}}\right)=\mathcal{R}_0<1,\]for all $Y^{\star}>0$.
Then, it is clear that there is no endemic steady-state.
\end{proof}

\begin{proof}[\proofname\ of Proposition \ref{P2}]
The hypothesis of Proposition \ref{P2} implies that the function $I : \left[0,\Omega/\gamma_{Y} \right] \to \left[ 0, +\infty\right)$ is decreasing with $I(0)=\mathcal{R}_0$ and $I\left(\Omega/\gamma_{Y}\right)=0$. Then, the equation $I(Y^{\star})=1$ has a positive solution if and only if $\mathcal{R}_0>1$.
\end{proof}

\subsection{Proof of Lemma}\label{sec:ProofLemma}
\begin{proof}[\proofname\ of Lemma \ref{Lem:D*Rinvariant}]
Under the condition \eqref{Hyp:gamma_X}, $\gamma:=\gamma_X=\gamma_Y$, we have
\[\dfrac{d}{dt}(X+Y)=\Omega-\gamma (X+Y)=\gamma\left(\dfrac{\Omega}{\gamma}-(X+Y)\right).\]
Let $B_1=A_1-\dfrac{p_{1}(0)}{\gamma_{A_1}}$, $B_2=A_2-\dfrac{p_{2}(0)}{\gamma_{A_2}}$ and
\begin{equation*}
    Z=\left\{\begin{aligned}
     &\dfrac{\Omega}{\gamma}-(X+Y), \quad\text{if}\quad X+Y\leq\dfrac{\Omega}{\gamma},\\
    &(X+Y) -\dfrac{\Omega}{\gamma},  \quad \text{if} \quad X+Y>\dfrac{\Omega}{\gamma}.
    \end{aligned}  \right.\\
\end{equation*}
It is clear that $X+Y$ cannot cross $\dfrac{\Omega}{\gamma}$. Then, with the new functions $B_1$, $B_2$ and $Z$, System \eqref{Newmodel2} becomes
\begin{equation}
	\left\{\begin{aligned}
		\dfrac{dB_{1}}{d t} & =p_{1}(Y)-p_{1}(0) - \gamma_{A_1}B_{1}, \\
		\dfrac{dB_{2}}{d t} & = p_{2}(Y)-p_{2}(0) - \gamma_{A_2}B_{2}, \\
		\dfrac{d Z}{d t} & =  - \gamma Z , \\
		\frac{d Y}{d t} & = \beta G\left(B_1+\dfrac{p_{1}(0)}{\gamma_{A_1}},B_2+\dfrac{p_{2}(0)}{\gamma_{A_2}}\right)\\
  &\times V\left(\dfrac{\Omega}{\gamma} +\mu Z - Y\right)- \gamma Y, \\
		\dfrac{d V}{d t} & = \Phi Y - \left(\gamma_V+\delta\right) V,
	\end{aligned}\right.
	\label{Newmodel2GlobStab}
\end{equation}
where 
\begin{equation*}
\left\{\begin{array}{lcl}
\mu=-1 & \text{if} & X+Y\leq\dfrac{\Omega}{\gamma}, \vspace{0.1cm}\\
\mu=1  & \text{if} & X+Y>\dfrac{\Omega}{\gamma}.
\end{array}\right.
\end{equation*}
If $\mu=1$, then $\dfrac{\Omega}{\gamma} +\mu Z=\dfrac{\Omega}{\gamma} +X+Y-\dfrac{\Omega}{\gamma} =X+Y\geq0$. \\
If $\mu=-1$, then $\dfrac{\Omega}{\gamma} +\mu Z=\dfrac{\Omega}{\gamma} -\left(\dfrac{\Omega}{\gamma} -(X+Y)\right)=X+Y\geq0$.\\ 
So, in all cases $\frac{\Omega}{\gamma} +\mu Z\geq 0$. Moreover, the equilibrium $P_0$ becomes $0_{\mathbb{R}^5}$ for System \eqref{Newmodel2GlobStab} and the set $\mathbb{D} \times \mathbb{R}^3_+$ corresponds now to the non-negative orthant $\mathbb{R}^5_+$. We have already proved in Proposition \ref{P1} that $Y$ and $V$ are nonnegative, and as the function $p_i$ is nondecreasing, we also have $B'_i(t)\geq -\gamma_{A_i}B_i(t)$, for all $t\geq 0$. Hence, any solution of System \eqref{Newmodel2GlobStab} that starts in $\mathbb{R}^5_+$ remains in $\mathbb{R}^5_+$. Then, the subset $\mathbb{D}\times \mathbb{R}^3_+$ is invariant under System \eqref{Newmodel2}.
\end{proof}

\enlargethispage{20pt}

%\ethics{Insert ethics text here.}

%\dataccess{Insert data access text here.}

%\aucontribute{Insert author contribute text here.}

%\competing{Insert competing text here.}

%\funding{Insert funding text here.}

%\ack{CPF thanks grant \# 304007/2023-4, National Council for Scientific and Technological Development (CNPq). This work was supported by grants \# 2019/22157-5, S\~ao Paulo Research Foundation (FAPESP) and Capes 88881.878875/2023-01. MA and CDC thank the Inria international associate team MoCoVec and BIO-CIVIP, STIC AmSud 23-STIC-02.}
%%--- Bibliography ---%%
\textbf{Acknowledgements.}CPF thanks grant \# 304007/2023-4, National Council for Scientific and Technological Development (CNPq). This work was supported by grants \# 2019/22157-5, S\~ao Paulo Research Foundation (FAPESP) and Capes 88881.878875/2023-01. MA and CDC thank the Inria international associate team MoCoVec and BIO-CIVIP, STIC AmSud 23-STIC-02.

\end{document}